\newtheorem{proposition}{Proposition}
\newtheorem{lemma}{Lemma}
\newtheorem{theorem}{Theorem}
\newaliascnt{apptheorem}{theorem}
\newtheorem{apptheorem}[apptheorem]{Theorem}
\gappto\appendix{\setcounter{apptheorem}{0}}
\begin{document}

\title{The Landscape of Fairness: An Axiomatic and Predictive Framework for Network QoE Sensitivity}

\author{
    Zhiyuan Ren,~\IEEEmembership{Member,~IEEE,}
    Xinke Jian,
    Wenchi Cheng,~\IEEEmembership{Senior Member,~IEEE,}
    ~and Kun Yang,~\IEEEmembership{Fellow,~IEEE}
\thanks{This work was supported by the National Key Research and Development Program of China (No. 2024YFE0200300).}%
\thanks{Zhiyuan Ren, Xinke Jian and Wenchi Cheng are with the School of Telecommunications Engineering, Xidian University, Xi'an 710071, China.}
\thanks{Kun Yang is with the School of Computer Science and Electronic Engineering, University of Essex, Colchester CO4 3SQ, U.K.}
\thanks{Corresponding author: Zhiyuan Ren (zyren@xidian.edu.cn).}%
}

\maketitle

\begin{abstract}
Evaluating network-wide fairness is challenging because it is not a static property but one highly sensitive to Service Level Agreement (SLA) parameters. This paper introduces a complete analytical framework to transform fairness evaluation from a single-point measurement into a proactive engineering discipline centered on a predictable sensitivity landscape. Our framework is built upon a QoE-Imbalance metric whose form is not an ad-hoc choice, but is uniquely determined by a set of fundamental axioms of fairness, ensuring its theoretical soundness. To navigate the fairness landscape across the full spectrum of service demands, we first derive a closed-form \textbf{covariance rule}. This rule provides an interpretable, local compass, expressing the fairness gradient as the covariance between a path's information-theoretic importance and its parameter sensitivity. We then construct \textbf{phase diagrams} to map the global landscape, revealing critical topological features such as robust "stable belts" and high-risk "dangerous wedges". Finally, an analysis of the landscape's curvature yields actionable, topology-aware design rules, including an optimal "Threshold-First" tuning strategy. Ultimately, our framework provides the tools to map, interpret, and navigate the landscape of system sensitivity, enabling the design of more robust and resilient networks.
\end{abstract}

\begin{IEEEkeywords}
QoE, Fairness, Entropy, Logistic Weighting, Network Robustness
\end{IEEEkeywords}

\section{Introduction}
\label{sec:intro}
Modern networks are tuned to meet strict Service Level Objectives (SLOs)—such as tail latency targets or hop-count caps. Yet sharpening the acceptance criterion can trigger a paradoxical failure mode: small changes in an SLA's strictness or performance threshold can cause traffic to discontinuously collapse onto a few “winner-takes-most” routes. This traffic polarization not only creates severe unfairness but also induces systemic fragility, leaving the network brittle and vulnerable to cascading failures. The core challenge for operators is that this behavior is unpredictable. What is missing is not another fairness index, but a framework to analyze and predict this \textbf{sensitivity}, turning the reactive, trial-and-error art of SLA tuning into a proactive engineering discipline.

This gap persists because existing methods are ill-equipped for such a sensitivity analysis. The literature can be broadly organized into three domains, each addressing related but distinct problems.

First, a significant body of research focuses on resource allocation algorithms designed to achieve specific, static fairness objectives. These works provide sophisticated mechanisms for enforcing either max-min fairness \cite{1, 2, 3, 4, 5, 6, 7, 8} or proportional fairness \cite{9, 10, 11} across diverse network architectures, from wireless sensor networks to emerging MEC and 5G systems. While crucial for operational implementation, the primary goal of these studies is the optimization of a network state with respect to a fixed fairness criterion. Our work takes a different perspective: instead of proposing a new allocation algorithm, we provide a framework to analyze the inherent sensitivity of any given network configuration, revealing how its fairness landscape dynamically responds to the full spectrum of tunable SLA parameters.

Second, the field of traffic engineering has explored the concept of robustness, particularly through robust routing mechanisms \cite{12, 13}. These approaches aim to maintain performance under uncertain or fluctuating traffic demands, often by defining a static "uncertainty set" and optimizing routing for the worst-case scenario within that set \cite{14}. More advanced strategies combine this proactive planning with reactive, real-time anomaly detection to reconfigure routes when unexpected events occur \cite{15, 16}. This paradigm treats large deviations as external shocks to be detected and mitigated. In contrast, our framework models the system's response to parameter changes as an intrinsic, predictable property. We do not merely react to anomalies; we map the entire sensitivity landscape to forecast precisely where these high-risk "performance cliffs" will emerge as a function of the SLA, transforming reactive traffic management into a proactive design discipline.

Finally, a third and rapidly growing domain applies machine learning and deep learning models to network anomaly detection \cite{17, 18, 19}. These data-driven techniques have proven effective at identifying unusual patterns in complex, high-dimensional data from a wide array of systems, including IP networks \cite{20}, mobile networks \cite{21}, automated vehicles \cite{22}, IoT infrastructure \cite{23}, and operational KPIs \cite{24, 25}. However, these models often function as "black boxes." They may predict that a failure will occur but cannot explain why in terms of fundamental system parameters and topology, nor can they provide operators with a navigable map of the performance landscape. They offer prediction without interpretability and lack a foundational theory for proactive design. Our framework bridges this gap by providing not only prediction but also deep, causal insights through an interpretable, physically-principled model.

While fairness metrics based on Shannon entropy are well-established, the \textbf{sensitivity landscape} of these metrics—how they dynamically respond to changes in Service Level Agreement (SLA) parameters—remains largely unexplored. The contribution of this work is therefore not to propose a new metric, but to provide the \textbf{first systematic framework to map, interpret, and navigate this landscape}. 

\begin{table*}[h]
\centering
\caption{Conceptual Comparison of Network Evaluation Frameworks}
\label{tab:conceptual_comparison}
\small
\begin{tabular}{p{2.5cm}|p{4.5cm}|p{4.5cm}|p{4.5cm}}
\toprule
\textbf{Feature Dimension} & \textbf{Traditional Static Analysis} & \textbf{Machine Learning Models} & \textbf{Our Sensitivity Landscape Framework} \\
\midrule
\textbf{Evaluation Object} & Static topological properties (e.g., centrality, degree distribution). & Point-wise numerical prediction of QoS/QoE. & The entire performance landscape under tunable SLA parameters. \\
\addlinespace
\textbf{Interpretability} & Medium (based on graph-theoretic definitions). & Very Low ("black box"). & \textbf{High} (based on physical interpretations of covariance and curvature). \\
\addlinespace
\textbf{Predictive Power} & None (descriptive, not predictive). & Strong (predicts "\emph{what}" will happen). & \textbf{Strong} (predicts "\emph{what, how, and why}" it will happen). \\
\addlinespace
\textbf{Sensitivity to SLA} & None ("context-blind"). & Indirect (SLA is an input feature). & \textbf{The core object of analysis}. \\
\addlinespace
\textbf{Guidance for Action} & Heuristic design principles. & Predictive alerts, enabling reactive responses. & \textbf{Actionable engineering rules} (e.g., "Threshold-First") for proactive design. \\
\addlinespace
\textbf{Theoretical Basis} & Graph Theory. & Statistical Learning Theory. & \textbf{Axiomatic System} + Information Theory + Calculus. \\
\bottomrule
\end{tabular}
\end{table*}

We summarize this conceptual comparison in Table~\ref{tab:conceptual_comparison}. This paper develops such a framework, centered on an axiomatically-derived QoE-Imbalance metric ($I$) that captures functional fairness through a tunable Sigmoid function ($s(h;a,h_0)$). Our contributions are threefold:

\begin{enumerate}
    \item \textbf{A Predictive Landscape Map with Theoretical Anchors:} We reframe the SLA parameter space as a "spectrum of service demands" and construct "Phase Diagrams" that serve as a global map of the fairness landscape. This map reveals critical features like robust "Stable Belts" and high-risk "Dangerous Wedges". The map's boundaries are not empirical but are anchored by two rigorously proven asymptotic laws: a small-$a$ quadratic law ($I \propto \mathrm{Var}(h)$) and a large-$a$ staircase limit.

    \item \textbf{An Interpretable, Local Navigation Compass:} To navigate this landscape, we derive a closed-form, covariance-based expression for the imbalance gradient ($\nabla I$). This "Covariance Rule" transforms the complex question of “When does tightening an SLA backfire?”  into an interpretable, one-line diagnostic. It acts as a local compass, providing principled, real-time guidance for parameter tuning.

    \item \textbf{Actionable, Curvature-Driven Design Rules:} We analyze the landscape's curvature (the Hessian matrix) to quantify risk. This analysis reveals a fundamental asymmetry in the landscape, leading to our most significant design recommendation: the "Threshold-First Rule," an optimal and robust tuning sequence for network operators.
\end{enumerate}

Ultimately, this work transforms fairness from a static score into a predictable and navigable landscape, offering a new design philosophy for engineering more fundamentally robust and resilient network systems, as Fig.~\ref{fig:overview} shows.

\begin{figure}
    \centering
    \includegraphics[width=1\linewidth]{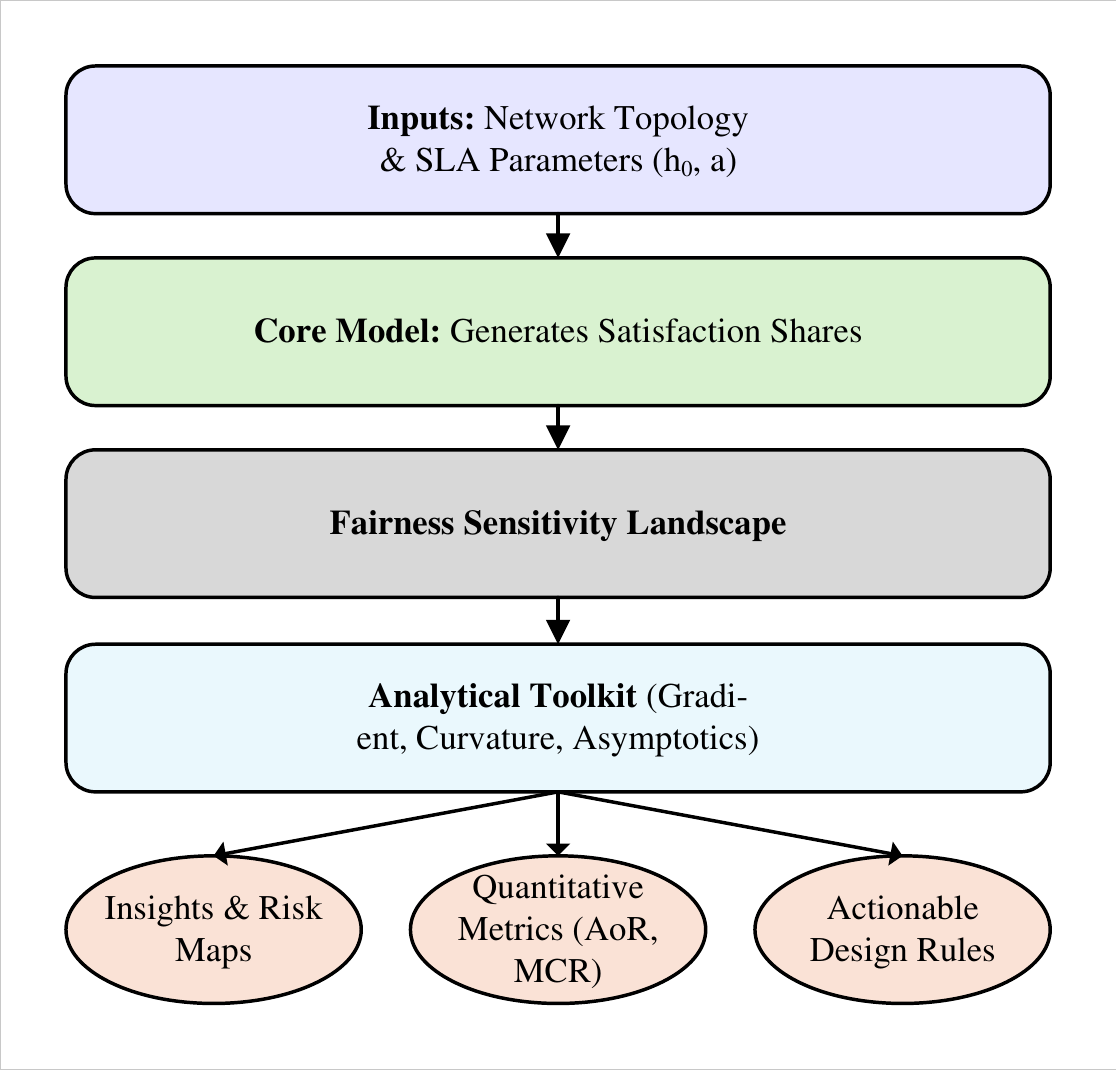}
    \caption{The conceptual workflow of the proposed fairness sensitivity analysis framework. The framework is structured as a top-to-bottom pipeline, transforming network topology and SLA parameters into a sensitivity landscape. This landscape is then analyzed by a dedicated toolkit, which simultaneously produces three types of outputs: actionable insights, quantitative metrics, and concrete design rules.}
    \label{fig:overview}
\end{figure}

The remainder of this paper is organized as follows. Section~\ref{sec:framework} formalizes our QoE-Imbalance framework, including its axiomatic foundation and analytical properties. Section~\ref{sec:sensitivity_framework} establishes our sensitivity analysis methodology and its theoretical foundations. Section~\ref{sec:results} presents the phase diagrams for a wide range of topologies and distills our key design rules. Section~\ref{sec:conclusion} concludes the paper.
\section{A Principled Framework for Measuring QoE Imbalance}
\label{sec:framework}
This section establishes the theoretical foundation of our analysis. We begin by using an axiomatic approach to derive the unique mathematical form of our QoE-Imbalance metric, $I$. We then define the SLA-aware satisfaction model that serves as the input to this metric and conclude by discussing the framework's analytical properties and computational complexity.

\subsection{An Axiomatic Foundation for Imbalance}
To rigorously quantify imbalance, we first ask: what are the fundamental properties that any rational measure of unfairness, $F$, must possess? We consider a set of $M \equiv N(N-1)$ entities (in our case, the node pairs), each with a satisfaction score $s_i>0$. From these, we define normalized satisfaction shares $p_i = s_i / \sum_j s_j$, which form a probability vector $\mathbf{p} \in \Delta^{M-1}$. We posit that any valid imbalance function $F(\mathbf{p})$ must satisfy five basic axioms:

\paragraph{Regularity Conditions.}
Throughout this section, we assume that the imbalance function $F$ satisfies standard "mild regularity" conditions common in characterizations of Shannon entropy. Specifically: (R1) $F$ is Borel-measurable on the probability simplex $\Delta^{M-1}$; and (R2) for any given number of outcomes $m \ge 2$, the bivariate function $\alpha \mapsto F(\alpha, 1-\alpha, 0, ..., 0)$ is monotonic. These conditions are sufficient to exclude pathological, non-physical solutions and ensure the uniqueness of the logarithmic form.

\begin{itemize}
    \item \textbf{A1. Anonymity:} The metric must be invariant to the reordering of shares ($F(p_1, ..., p_M) = F(p_{\sigma(1)}, ..., p_{\sigma(M)})$ for any permutation $\sigma$). It treats all entities equally, regardless of their identity.
    \item \textbf{A2. Scale Invariance:} The metric must depend only on the relative shares, not their absolute scale ($F(\mathbf{p}) = F(\lambda\mathbf{p})$ for any $\lambda>0$). For instance, measuring satisfaction in percentages or absolute scores should not change the assessed level of fairness.
    \item \textbf{A3. Calibration:} The metric must be minimized (e.g., 0) for a perfectly uniform distribution (perfect fairness) and maximized for a "winner-takes-all" scenario where one entity holds the entire share (absolute unfairness).
    \item \textbf{A4. Transfer Principle (Pigou-Dalton Condition):} A small transfer from a larger share to a smaller one (without changing their relative rank) must not increase the measured imbalance. This captures the intuitive notion of "reducing inequality."
    \item \textbf{A5. Decomposability:} The total imbalance of a system should be consistently decomposable into the imbalance \emph{between} subgroups and the weighted average of imbalances \emph{within} each subgroup. This is crucial for hierarchical or multi-level analysis.
\end{itemize}

While several metrics are used to quantify inequality, they are not all equally suited for this task from a theoretical standpoint. As shown in Table~\ref{tab:axiom_comparison}, many common indices fail to satisfy one or more of these fundamental axioms.

Notably, metrics like the Gini index, Jain's Fairness Index (JFI), and the Coefficient of Variation (CV) all fail the crucial Decomposability axiom. \footnote{The decomposability property is critical for hierarchical network analysis. For instance, a network operator might need to assess whether unfairness originates from inter-AS routing policies (between-group imbalance) or from traffic management within a single AS (within-group imbalance). A decomposable metric like ours provides a consistent framework for such multi-level diagnostics, whereas non-decomposable metrics like the Gini index cannot.}Variance, a basic measure of dispersion, fails both Scale Invariance and Decomposability. In contrast, it is a well-established theorem in information theory that the only function (up to a positive affine transformation) that simultaneously satisfies all five axioms is the Shannon entropy gap.

\begin{apptheorem}[Uniqueness of the Imbalance Metric]
\label{thm:uniqueness_final}
The only function that satisfies Axioms A1-A5 is, up to a positive affine transformation, the Shannon entropy gap: $F(\mathbf{p}) = c(\log M - H(\mathbf{p}))$. A proof sketch is provided in Appendix~\ref{app:proof_uniqueness_thm}.
\end{apptheorem}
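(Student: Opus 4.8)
The plan is to reduce the statement to the classical Shannon–Khinchin (equivalently Faddeev) characterization of entropy, treating the imbalance function $F$ as an affine image of an entropy-like functional. First I would invoke Scale Invariance (A2) to justify that $F$ is well-defined on the probability simplex $\Delta^{M-1}$ in the normalized shares $p_i$, and Anonymity (A1) to guarantee that $F$ is a \emph{symmetric} function of its arguments. I would then posit a candidate entropy $H$ via the affine relation $F(\mathbf{p}) = c\,\bigl(g(M) - H(\mathbf{p})\bigr)$, with a positive constant $c$ and a value-at-uniform normalization $g(M)$ to be determined, and show that the five axioms on $F$ translate verbatim into the standard axioms on $H$: symmetry, a normalization at the uniform distribution, and—crucially—the grouping/recursivity property inherited from Decomposability (A5).

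The technical core is the grouping axiom. Applying A5 to a partition that isolates two outcomes yields the recursion
\begin{equation}
H(p_1,\dots,p_M) = H(p_1+p_2,\,p_3,\dots,p_M) + (p_1+p_2)\,H\!\left(\tfrac{p_1}{p_1+p_2},\tfrac{p_2}{p_1+p_2}\right).
\end{equation}
Iterating this collapses the general case onto the binary entropy function $f(\alpha):=H(\alpha,1-\alpha)$, while applying A5 to equiprobable refinements produces the Cauchy-type relation $\phi(mn)=\phi(m)+\phi(n)$ for $\phi(m):=H(1/m,\dots,1/m)$. The remaining step is to pin down $f$: decomposability yields the fundamental equation of information theory,
\begin{equation}
f(\alpha) + (1-\alpha)\,f\!\left(\tfrac{\beta}{1-\alpha}\right) = f(\beta) + (1-\beta)\,f\!\left(\tfrac{\alpha}{1-\beta}\right),
\end{equation}
and the Borel-measurability condition (R1) selects $f(\alpha) = -\kappa\bigl(\alpha\log\alpha + (1-\alpha)\log(1-\alpha)\bigr)$ as the unique solution, with the monotonicity condition (R2) fixing the branch and the sign of $\kappa$. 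Back-substitution gives $H(\mathbf{p}) = -\kappa\sum_i p_i \log p_i$ and forces $g(M)=\log M$.

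Having established $H$ up to the positive constant $\kappa$, I would fix orientation and scale using Calibration (A3): since the derived $H$ is maximized at the uniform distribution and vanishes at any winner-takes-all vertex, the imbalance $F = c\,(\log M - H)$ inherits the required extremal behaviour precisely when $c>0$, yielding the stated affine form. Finally, I would verify that the Transfer Principle (A4) is automatically satisfied, as it is equivalent to Schur-concavity of $H$, a property the logarithmic form possesses; this confirms that A4 is consistent with—rather than over-determining—the solution. The hard part will be the functional-equation step: extracting the logarithm rigorously depends on the regularity hypotheses (R1)–(R2) to exclude the pathological additive solutions that Cauchy's equation otherwise admits, and this measurability argument, rather than the algebraic grouping manipulations, is the genuinely delicate heart of the proof.
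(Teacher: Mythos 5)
Your proposal is correct, and it shares the paper's overall skeleton---use Decomposability (A5) to generate a functional equation, invoke the regularity conditions (R1)--(R2) to force the logarithmic solution, then fix sign and scale with Calibration (A3) and the Transfer Principle (A4)---but the functional-equation core is genuinely different. The paper works directly with the binary \emph{imbalance} $\phi(\alpha)=F(\alpha,1-\alpha)$, claims that two alternative groupings of a three-outcome distribution yield the multiplicative Cauchy equation $\phi(\alpha\beta)=\phi(\alpha)+\phi(\beta)$, solves it as $\phi(\alpha)=-k\log\alpha$, and then rebuilds the $M$-dimensional form recursively. You instead transform $F$ into a candidate entropy $H$, map A1--A5 onto the Shannon--Khinchin/Faddeev axioms, and run the classical machinery: the grouping recursion, Faddeev's discrete relation $\phi(mn)=\phi(m)+\phi(n)$ on uniform distributions, and the fundamental equation of information theory for the binary function $f$, with measurability selecting $f(\alpha)=-\kappa\bigl(\alpha\log\alpha+(1-\alpha)\log(1-\alpha)\bigr)$. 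Your route is heavier but more faithful to what the algebra actually produces: if one keeps the weight factors in the paper's two-grouping computation, the equation that emerges is precisely your weighted fundamental equation, not the clean unweighted Cauchy equation---and the paper's intermediate solution $-k\log\alpha$ is not even symmetric under $\alpha\mapsto 1-\alpha$, which sits awkwardly with Anonymity (A1), whereas your symmetric binary-entropy solution is the correct one. What the paper's sketch buys is brevity; what yours buys is rigor and the ability to lean on established characterization theorems rather than re-derive them. Two small points to tighten: the claim that the axioms translate ``verbatim'' onto $H$ under the ansatz $F=c\,(g(M)-H)$ requires checking that the normalization $g(M)$ is itself consistent with grouping (the between-group term must be assessed relative to group sizes, which is exactly what forces $g(M)=\log M$), and the fundamental equation should carry its domain restriction $\alpha+\beta\le 1$, $\alpha,\beta\ge 0$.
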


This theorem provides a powerful justification for our choice of metric. By normalizing this gap to the range $[0, 1]$, we formally define the QoE-Imbalance $I$:
\begin{align}
    I(\mathbf{p}) &= 1 - \frac{H(\mathbf{p})}{\log_2(M)} \label{eq:imbalance_final_2} \\
    \text{where } H(\mathbf{p}) &= - \sum_{i=1}^{M} p_{i} \log_2(p_{i}) \label{eq:entropy_final_2}
\end{align}
An imbalance of $I=0$ signifies perfect fairness (all shares are equal), while $I=1$ signifies absolute unfairness (one entity holds all the share).

\begin{table}[ht!]
\centering
\caption{Comparison of Imbalance Metrics Against Fairness Axioms}
\label{tab:axiom_comparison}
\begin{tabular}{lccccc}
\toprule
\textbf{Axiom} & \textbf{$I$} & \textbf{Gini} & \textbf{JFI} & \textbf{Variance} & \textbf{CV} \\
\midrule
A1. Anonymity        & $\checkmark$ & $\checkmark$ & $\checkmark$ & $\checkmark$ & $\checkmark$ \\
A2. Scale Invariance & $\checkmark$ & $\checkmark$ & $\checkmark$ & $\times$     & $\checkmark$ \\
A3. Calibration      & $\checkmark$ & $\checkmark$ & $\checkmark$ & $\checkmark$ & $\checkmark$ \\
A4. Transfer Principle & $\checkmark$ & $\checkmark$ & $\checkmark$ & $\checkmark$ & $\checkmark$ \\
A5. Decomposability  & $\checkmark$ & $\times$     & $\times$     & $\times$     & $\times$     \\
\bottomrule
\end{tabular}
\end{table}

\subsection{SLA-Aware Satisfaction Model}

\subsubsection*{Model Scope and Rationale}
To reveal the fundamental relationship between a network's structure and its fairness sensitivity, we deliberately build our model upon a simplified, static foundation. We use hop count as the path cost metric, $h(u,v)$, and analyze the network as a static topology. This approach is a purposeful scientific abstraction designed to isolate the impact of the underlying graph structure from dynamic, operational variables like traffic congestion or link quality fluctuations. Our framework is, however, generic; the cost function $h(u,v)$ can be readily extended to incorporate more complex metrics such as latency or composite QoS scores. The present work thus establishes an essential \textbf{structural baseline}, providing a clean, topology-driven performance landscape against which all future dynamic analyses can be compared.

To compute the imbalance $I$, we first need to define the satisfaction shares $\{p_{u,v}\}$, which are derived from an underlying QoE satisfaction score, or weight, $w(u,v)$, for each node pair. This requires mapping an objective path cost—here, the hop count $h(u,v)$ between nodes $u, v \in V$—to a subjective QoE score.

To ground this mapping in established telecommunication principles, we adopt the standardized approach recommended by the ITU-T in recommendations such as G.1030. This suggests modeling the non-linear relationship between technical parameters and user satisfaction using a logistic curve, which is a canonical form of the Sigmoid function. This model accurately captures the saturation of experience at performance extremes and the existence of a critical threshold where perceived quality changes most drastically. We thus define the satisfaction score using a tunable Sigmoid function with two SLA parameters: the evaluation strictness $a > 0$ and the performance threshold $h_0 > 0$.
\begin{equation}
    w(u,v) = \frac{1}{1 + \exp[a(h(u,v) - h_0)]}
    \label{eq:sigmoid}
\end{equation}
The satisfaction shares $\{p_{u,v}\}$ are then derived by normalizing these weights:
\begin{equation}
    p_{u,v} = \frac{w(u,v)}{W}, \quad \text{where } W = \sum_{u' \neq v'} w(u',v')
    \label{eq:p_uv}
\end{equation}
For completeness, we also define the network's average satisfaction, $\bar{s}$, as a secondary metric:
\begin{equation}
    \bar{s} = \frac{1}{N(N-1)} \sum_{u \neq v} w(u,v)
    \label{eq:s_bar}
\end{equation}

\subsection{Analytical Properties and Invariance}
A prerequisite for our calculus-based sensitivity analysis is that the objective functions are well-behaved. We formally establish that the metrics $I(a,h_0)$ and $\bar{s}(a,h_0)$ are both \textbf{continuous} and \textbf{differentiable} with respect to the SLA parameters $a$ and $h_0$ across their entire domain (see Appendix~\ref{app:proof_analytical_props} for formal proofs). This ensures our framework is robust and validates the use of gradients and Hessians.

Furthermore, these definitions lead to a fundamental invariance property, which ensures that our analysis is independent of the absolute scale of the path costs.
\begin{lemma}[Invariance to Translation and Scaling]
\label{lemma:invariance}
The metrics $I$ and $\bar{s}$ are invariant under affine transformations of the path costs, provided the SLA parameters are adjusted accordingly. Specifically, for any constant $c$ and $\lambda > 0$, the mappings $(h_i, h_0) \mapsto (h_i+c, h_0+c)$ and $(h_i, h_0, a) \mapsto (\lambda h_i, \lambda h_0, a/\lambda)$ leave the values of $I$ and $\bar{s}$ unchanged.
\end{lemma}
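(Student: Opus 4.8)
The plan is to isolate a single scalar that captures all dependence of the satisfaction score on both the path costs and the SLA parameters, and then to show that each of the two claimed transformations fixes this scalar. Concretely, for each ordered pair $i$ I would define the \emph{effective argument} $z_i := a(h_i - h_0)$, so that the weight in \eqref{eq:sigmoid} becomes $w_i = 1/(1 + e^{z_i})$, a function of $z_i$ alone. Every downstream quantity is then built exclusively from the collection $\{z_i\}$: the normalizer $W = \sum_i w_i$, the shares $p_i = w_i/W$, the entropy $H(\mathbf{p})$, the imbalance $I = 1 - H(\mathbf{p})/\log_2 M$, and the average satisfaction $\bar{s} = \frac{1}{M}\sum_i w_i$ with $M = N(N-1)$. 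Since the combinatorial count $M$ and the constant $\log_2 M$ depend only on the (untouched) topology, it suffices to verify that both maps leave every $z_i$ unchanged.

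The two verifications are one-line algebraic cancellations. For the translation map $(h_i, h_0) \mapsto (h_i + c, h_0 + c)$, the strictness $a$ is untouched and the common shift cancels inside the difference: $a\big((h_i + c) - (h_0 + c)\big) = a(h_i - h_0) = z_i$. For the scaling map $(h_i, h_0, a) \mapsto (\lambda h_i, \lambda h_0, a/\lambda)$ with $\lambda > 0$, the factor $\lambda$ cancels against its reciprocal: $(a/\lambda)\big(\lambda h_i - \lambda h_0\big) = a(h_i - h_0) = z_i$. In both cases each $z_i$, and hence each $w_i$, is preserved; propagating this through $W$, the shares $p_i$, the entropy $H$, and the fixed constants $M$ and $\log_2 M$ shows that $I$ and $\bar{s}$ take identical values before and after the transformation.

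There is no genuine analytic obstacle here: the result is a structural consequence of the fact that the Sigmoid in \eqref{eq:sigmoid} couples the costs and the strictness only through the combination $a(h_i - h_0)$. The sole point requiring care is bookkeeping---one must confirm that the transformations act purely on the numerical cost labels and leave the underlying graph fixed, so that the number of ordered pairs $M$, the index set, and the normalizing constant $\log_2 M$ carry no hidden dependence on $a$, $h_0$, or the $h_i$. Once this is observed, the two cancellations complete the argument, and the lemma follows without invoking the continuity or differentiability properties established earlier.
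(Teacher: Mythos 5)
Your proposal is correct and follows essentially the same route as the paper's own proof: both arguments reduce everything to the observation that the Sigmoid's exponent $a(h_i - h_0)$ is fixed by each transformation, so all weights $w_{u,v}$---and hence $W$, the shares $p_{u,v}$, $\bar{s}$, and $I$---are unchanged. Your additional bookkeeping remark (that the topology, and therefore $M$ and $\log_2 M$, is untouched) is a small but harmless elaboration beyond what the paper states explicitly.
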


\begin{proof}[Proof Sketch]
The proof follows directly from the fact that the exponent in the Sigmoid function, $a(h_i - h_0)$, remains unchanged under both transformations. Since all weights $\{w_{u,v}\}$ are invariant, the resulting metrics $I$ and $\bar{s}$ are also invariant. A full proof is provided in Appendix~\ref{app:proof_lemma_invariance_new}.
\end{proof}
This invariance is critical: it guarantees that our framework's findings are fundamental and not artifacts of the units chosen for path cost (e.g., meters vs. kilometers, or milliseconds vs. seconds).

\subsection{Computational Complexity}
Calculating the metrics within our framework involves two core steps: first, determining the shortest path lengths between all pairs of nodes (the All-Pairs Shortest Path, or APSP, problem); and second, performing weighted summations and normalizations. The APSP calculation is the main bottleneck. For a graph with $N$ nodes and $M$ edges, the complexity is $O(N^3)$ for dense graphs (using Floyd-Warshall) and typically $O(N(M + N \log N))$ for sparse graphs (running Dijkstra from each node). The subsequent calculation of $I$ and $\bar{s}$ is $O(N^2)$. Therefore, the overall complexity is dominated by the APSP step, posing significant computational challenges for exact calculation on very large-scale networks.

\section{Dissecting the Fairness Landscape: An Analytical Framework for Sensitivity}
\label{sec:sensitivity_framework}
Having established a principled QoE-Imbalance metric $I(a,h_0)$ in Section~\ref{sec:framework}, we now develop the mathematical machinery required to analyze its sensitivity. Our goal is to transform this static model into a dynamic, navigable landscape. To achieve this, we must be able to answer three critical operational questions: (i) \textbf{The Direction Question:} When tuning an SLA, will fairness improve or degrade? (ii) \textbf{The Trade-off Question:} How does a change in fairness relate to a change in overall performance? (iii) \textbf{The Risk Question:} Are there specific SLA regions that are inherently unstable and prone to catastrophic performance shifts? This section develops the analytical tools to answer each of these questions.

\subsection{First-Order Analysis: The Gradient as a Local Compass}
To answer the Direction and Trade-off questions, we must understand the local geometry of the performance landscape. This is captured by the gradient vectors of our metrics in the $(a, h_0)$ parameter space. The Imbalance Gradient, $\nabla I = (\partial I / \partial a, \partial I / \partial h_0)$, points in the direction of the steepest increase in functional unfairness, while the Satisfaction Gradient, $\nabla \bar{s} = (\partial \bar{s} / \partial a, \partial \bar{s} / \partial h_0)$, points toward the steepest increase in average satisfaction. While the gradient of $\bar{s}$ is a simple average of path sensitivities, the gradient of $I$ has a non-trivial structure that provides deep insight.

\begin{theorem}[Covariance Gradient Criterion for Imbalance]
\label{thm:cov_grad_I_new}
\textit{The partial derivative of the QoE-Imbalance $I$ with respect to an SLA parameter $\theta \in \{a, h_0\}$ is given by the weighted covariance:}
\begin{equation}
    \frac{\partial I}{\partial \theta} = \frac{1}{\ln(2) H_{\max}} \mathrm{Cov}_{\mathbf{p}}\left( 1 + \ln p_{u,v}, g_{u,v}^{(\theta)} \right)
\end{equation}
\textit{where $H_{\max} = \log_2(N(N-1))$, and the sensitivity terms are $g_{u,v}^{(a)} = -(h(u,v)-h_0)(1-w_{u,v})$ and $g_{u,v}^{(h_0)} = a(1-w_{u,v})$. The detailed proof is provided in Appendix~\ref{app:proof_cov_grad}.}
\end{theorem}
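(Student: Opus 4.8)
The plan is to exploit the affine relation $I = 1 - H/H_{\max}$, which reduces the problem to differentiating the Shannon entropy $H(\mathbf{p})$ and then repackaging the result as a covariance. Writing the $M = N(N-1)$ shares with a single index $i$ (standing for the pair $(u,v)$), I would first note that $\partial I/\partial\theta = -\frac{1}{H_{\max}}\,\partial H/\partial\theta$ and, converting the entropy to natural logarithms via $\log_2 = \ln/\ln 2$, differentiate term-by-term to obtain $\partial H/\partial\theta = -\frac{1}{\ln 2}\sum_i (1 + \ln p_i)\,\partial p_i/\partial\theta$. Because the shares are normalized, $\sum_i \partial p_i/\partial\theta = 0$, so the constant $1$ inside the bracket is free to keep or drop; I retain it to match the entropy-score form $1 + \ln p$ appearing in the statement.

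The crux is evaluating $\partial p_i/\partial\theta$. Since $p_i = w_i/W$ with $W = \sum_j w_j$, the quotient rule gives $\partial p_i/\partial\theta = \frac{1}{W}\dot w_i - p_i\,\frac{1}{W}\sum_j \dot w_j$, where $\dot w_i \equiv \partial w_i/\partial\theta$. The decisive observation is that the sensitivity terms $g^{(\theta)}_i$ are exactly the \emph{logarithmic} derivatives of the weights: a direct computation on the Sigmoid, using $e^{a(h_i-h_0)}/(1+e^{a(h_i-h_0)}) = 1 - w_i$, shows $\partial\ln w_i/\partial a = -(h_i - h_0)(1 - w_i) = g^{(a)}_i$ and $\partial\ln w_i/\partial h_0 = a(1 - w_i) = g^{(h_0)}_i$, so in both cases $\dot w_i = w_i\, g^{(\theta)}_i$. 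Substituting this and using $\dot w_i/W = p_i g_i$ together with $\frac{1}{W}\sum_j \dot w_j = \sum_j p_j g_j = \bar g$ collapses the quotient-rule expression to the clean form $\partial p_i/\partial\theta = p_i\bigl(g_i - \bar g\bigr)$, where $\bar g = \mathbb{E}_{\mathbf{p}}[g]$ is the $\mathbf{p}$-weighted mean of the sensitivities.

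With this in hand, $\partial H/\partial\theta = -\frac{1}{\ln 2}\sum_i (1 + \ln p_i)\,p_i(g_i - \bar g)$, and I would recognize this sum as a centered second moment. Since $\sum_i p_i(g_i - \bar g) = 0$, subtracting the mean of the first factor changes nothing, and $\sum_i p_i (1+\ln p_i)(g_i - \bar g) = \mathbb{E}_{\mathbf{p}}[(1+\ln p)\,g] - \mathbb{E}_{\mathbf{p}}[1+\ln p]\,\mathbb{E}_{\mathbf{p}}[g] = \mathrm{Cov}_{\mathbf{p}}(1 + \ln p_i,\, g_i)$. Combining this with $\partial I/\partial\theta = -\frac{1}{H_{\max}}\,\partial H/\partial\theta$ delivers the claimed identity $\partial I/\partial\theta = \frac{1}{\ln 2\,H_{\max}}\,\mathrm{Cov}_{\mathbf{p}}(1+\ln p_{u,v},\, g^{(\theta)}_{u,v})$, with $H_{\max} = \log_2(N(N-1))$.

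The differentiation and bookkeeping are routine; the conceptual obstacle—and the step I would present most carefully—is the identification $g^{(\theta)}_i = \partial\ln w_i/\partial\theta$, i.e. that the stated sensitivity terms are the score functions of the weights. This is precisely what makes the normalization correction $-p_i\bar g$ act as an exact mean-centering of $g$, converting an otherwise opaque weighted sum into a covariance. I would also flag explicitly the two places where terms vanish by normalization: the constant $1$ survives only cosmetically, and the mean $\bar g$ drops out of the first factor, which together guarantee the result is independent of any constant offset one might attach to $\ln p$ or to $g$.
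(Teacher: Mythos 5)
Your proof is correct and follows essentially the same route as the paper's: the chain rule through the entropy giving the $(1+\ln p_i)$ factor, the quotient rule collapsing to $\partial p_i/\partial\theta = p_i\bigl(g_i^{(\theta)} - \mathbb{E}_{\mathbf{p}}[g^{(\theta)}]\bigr)$, and the identification of the resulting centered sum as $\mathrm{Cov}_{\mathbf{p}}(1+\ln p,\, g^{(\theta)})$. Your only addition is the explicit verification that $g_i^{(\theta)} = \partial \ln w_i/\partial\theta$ for the Sigmoid weights, a step the paper's appendix asserts as given ($\partial s_i/\partial\theta = s_i\, g_i^{(\theta)}$) rather than derives, so your write-up is, if anything, slightly more complete.
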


Theorem~\ref{thm:cov_grad_I_new} is more than a formula; it is a powerful diagnostic tool. It states that the change in network-wide imbalance is determined by the covariance between two key quantities for each path:
\begin{itemize}
    \item \textbf{Entropy Leverage ($1 + \ln p_{u,v}$):} This term comes directly from the derivative of entropy. It acts as an information-theoretic weight. Paths with a very small satisfaction share $p_{u,v}$ (i.e., marginalized or "poor" paths) have a large negative $\ln p_{u,v}$, giving them significant leverage to alter the total entropy. Conversely, paths with a large share (dominant or "rich" paths) have low leverage.
    \item \textbf{Parameter Sensitivity ($g_{u,v}^{(\theta)}$):} This term quantifies how strongly a path's satisfaction score reacts to a change in the SLA parameter $\theta$. Paths whose performance is near the threshold $h_0$ are typically the most sensitive.
\end{itemize}
The sign of the covariance reveals the nature of the system's response:
\begin{itemize}
    \item \textbf{Positive Covariance ($\partial I / \partial \theta > 0$):} Fairness \emph{worsens}. This is the critical "backfire" scenario. It occurs when the paths that are most sensitive to the SLA change are also the ones with the highest entropy leverage (i.e., the marginalized paths). In physical terms, the policy change disproportionately affects the "poor," increasing the gap between them and the "rich" and thus exacerbating the overall imbalance.
    \item \textbf{Negative Covariance ($\partial I / \partial \theta < 0$):} Fairness \emph{improves}. This happens when the policy change predominantly affects the dominant, high-share paths. By altering the satisfaction of the "rich," the change reduces their dominance, allowing the overall distribution of satisfaction shares to become more uniform.
    \item \textbf{Zero Covariance ($\partial I / \partial \theta \approx 0$):} The system is \emph{robust} to the change. The policy's impact is uncorrelated with the paths' existing satisfaction shares. This is the mathematical signature of a "stable belt" region.
\end{itemize}
This interpretation transforms the gradient from a simple vector into a rich, narrative explanation of network behavior. Furthermore, the Trade-off question is answered by analyzing the angle between the two gradient vectors, $\nabla I$ and $\nabla \bar{s}$, which quantifies the local efficiency-fairness conflict.

To illustrate the diagnostic power of this interpretation over a purely numerical gradient, consider a scenario where an operator slightly tightens a threshold (e.g., increases $h_0$), and observes that the overall network imbalance $I$ increases. A numerical gradient would simply report a positive value for $\partial I/\partial h_0$, confirming the negative outcome but offering no causal explanation. Our covariance criterion, in contrast, enables a deeper diagnosis. An operator could find that the positive covariance was driven by a strong correlation between high parameter sensitivity ($g^{(h_0)}$) and high entropy leverage ($1+\ln p_{u,v}$). In practical terms, this reveals that the policy change did not affect all paths equally, but instead disproportionately impacted a specific group of already-marginalized, long-haul paths whose performance was near the tipping point. This diagnostic insight—identifying \emph{which} cohort of paths is responsible for the degradation—is crucial for designing a more effective policy and is entirely inaccessible from a simple numerical result.

\subsection{Asymptotic Analysis: Anchoring the Landscape}
To understand the global structure of the sensitivity landscape, we analyze its behavior in the asymptotic regimes of the evaluation strictness, $a$. The following theorems provide the theoretical anchors for interpreting the phase diagrams, revealing how the complex fairness landscape simplifies to intuitive principles at the extremes of the SLA spectrum.

\begin{theorem}[Small-$a$ Second-Order Law for Imbalance]
\label{thm:small_a_I_new}
\textit{Let $\mathrm{Var}(h)$ be the variance of the all-pairs shortest path distribution. As $a \to 0$, the QoE-Imbalance $I$ has the following second-order approximation, independent of the threshold $h_0$:}
\begin{equation}
    I(a,h_0) = \frac{a^2\,\mathrm{Var}(h)}{8\log_2(N(N-1))} + o(a^2)
\end{equation}
\textit{The proof is provided in Appendix~\ref{app:proof_small_a}.}
\end{theorem}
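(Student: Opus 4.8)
The plan is to treat the imbalance as a smooth function of $a$ near $a=0$ and extract its leading Taylor coefficient. The starting observation is that at $a=0$ every weight collapses to $w(u,v)=\tfrac12$, so the shares are exactly uniform and $I(0,h_0)=0$; since $I\ge 0$ everywhere, $a=0$ is a minimum and the first non-vanishing term must be quadratic. I would therefore expand the weights $w$, then the shares $\mathbf p$, then the entropy $H(\mathbf p)$, consistently in powers of $a$, and read off the coefficient of $a^2$.

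First I would expand the Sigmoid to first order. Writing $x_i=a(h_i-h_0)$ and using the logistic slope $f'(0)=-\tfrac14$ for $f(x)=(1+e^x)^{-1}$, the weights satisfy $w_i=\tfrac12-\tfrac{a}{4}(h_i-h_0)+O(a^2)$. Normalizing by $W=\sum_j w_j$ gives shares of the form $p_i=\tfrac1M(1+\delta_i)$ with $M=N(N-1)$, where the leading deviation is
\begin{equation}
\delta_i = -\tfrac{a}{2}\,(h_i-\bar h)+O(a^2), \qquad \bar h=\tfrac1M\textstyle\sum_j h_j .
\end{equation}
The key structural point is that $h_0$ enters $w_i$ only through a term that is \emph{constant across pairs}, so it is annihilated by the normalization and drops out of $\delta_i$ at leading order; this is precisely the mechanism behind the claimed independence from $h_0$. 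By construction $\sum_i\delta_i=0$, a constraint I would track carefully, since it is what kills the linear contribution to the entropy.

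Next I would expand the entropy gap. Writing $I=(H_{\max}-H)/H_{\max}$ with $H_{\max}=\log_2 M$, the deficit $H_{\max}-H$ is, up to the base-conversion constant, the Kullback--Leibler divergence of $\mathbf p$ from the uniform law $\mathbf u$, whose standard second-order expansion is the $\chi^2$ law $D_{\mathrm{KL}}(\mathbf p\,\|\,\mathbf u)=\tfrac12\sum_i \tfrac{(p_i-1/M)^2}{1/M}+O(\delta^3)=\tfrac{1}{2M}\sum_i\delta_i^2+O(\delta^3)$. The linear term vanishes exactly because $\sum_i\delta_i=0$, confirming that the leading behaviour is genuinely $O(a^2)$; note also that the $O(a^2)$ piece of $\delta_i$ contributes to the gap only through this vanishing linear term and through cross terms of order $a^3$, so only the first-order deviation $-\tfrac{a}{2}(h_i-\bar h)$ survives at the quadratic order. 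Substituting $\sum_i\delta_i^2=\tfrac{a^2}{4}\sum_i(h_i-\bar h)^2=\tfrac{a^2}{4}M\,\mathrm{Var}(h)$ and dividing by $H_{\max}$ then yields the stated second-order law, with the overall factor $\tfrac18$ assembled as the product of the $\tfrac12$ from the quadratic entropy term and the $\tfrac14=(\tfrac12)^2$ from the squared first-order slope of the Sigmoid.

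The main obstacle is bookkeeping rather than any single hard estimate: one must expand the weights, the normalizing sum $W$, the shares, and the entropy all to a consistent order and verify that every first-order-in-$a$ contribution cancels, so that the surviving term is truly $O(a^2)$ and not merely formally so. Because $M$ is finite and $w$ is real-analytic in $a$, promoting the formal series to a rigorous statement with an honest $o(a^2)$ remainder is routine---Taylor with Lagrange remainder applied term-by-term to a finite sum---but the two cancellations, driven respectively by $\sum_i\delta_i=0$ and by the normalization that removes $h_0$, must be handled with care. A secondary point worth establishing cleanly is the uniformity of the remainder in $h_0$, which legitimizes the claim that the leading coefficient is exactly threshold-independent, rather than merely independent in the $a\to0$ limit for each fixed $h_0$.
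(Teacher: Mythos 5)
Your proposal is correct and takes essentially the same route as the paper's own proof: Taylor-expand the sigmoid about $a=0$, normalize to obtain $p_i = \tfrac{1}{M}\bigl(1 - \tfrac{a}{2}(h_i - \bar h)\bigr) + O(a^2)$ (which is exactly where the $h_0$-dependence cancels), and apply the standard second-order entropy expansion; your KL-divergence/$\chi^2$ framing is the same computation in slightly cleaner packaging, and your extra observations (the a priori vanishing of the linear term via $I\ge 0$, the $\sum_i\delta_i=0$ constraint, uniformity in $h_0$) are sound refinements rather than a different method. One shared caveat: carried out exactly with the base-2 entropy used to define $I$, both your argument and the paper's appendix proof actually yield $I = \frac{a^2\,\mathrm{Var}(h)}{8\ln 2\,\log_2 M} + o(a^2)$, i.e.\ with an extra factor $1/\ln 2$ relative to the theorem as stated — a discrepancy the paper itself concedes in the footnote to its Table~II — so your closing claim that the constant assembles to exactly the stated $1/(8\log_2 M)$ glosses over the same base-conversion factor rather than resolving it.
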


\noindent It is important to note that as $a \to \infty$, the probability vector approaches the boundary of the simplex where the gradient is not strictly defined. All gradient-based statements in this paper are thus asserted for finite $a$, while the analysis in the limit relies on the piecewise-constant nature of $I_{\infty}(h_0)$. A formal treatment of this boundary behavior is provided in Appendix~\ref{app:proof_analytical_props}.

\subsubsection*{Interpretation and Implications}
This theorem's primary value lies in providing a powerful \textbf{\textit{a priori} design heuristic}. It reveals a profound simplification: under tolerant SLAs (small $a$), the complex QoE-Imbalance $I$ is directly governed by a single, static, and easily computable topological property: the \textbf{variance of its all-pairs shortest path distribution, $\mathrm{Var}(h)$}. This provides an invaluable tool for network architects in the planning phase. Without running any complex simulations or dynamic analysis, one can compare multiple candidate topologies simply by calculating their $\mathrm{Var}(h)$. To engineer a network that is inherently robust and fair for a wide range of tolerant services (e.g., bulk data transfer), the design goal is clear: favor topologies that minimize path-length variance. This transforms a complex design problem into the optimization of a single, fundamental graph metric.

\begin{theorem}[Large-$a$ Step-Function Limit for Imbalance]
\label{thm:large_a_I_new}
\textit{Let $K(h_0)$ be the number of node pairs with path cost $h(u,v) < h_0$. As $a \to \infty$, the QoE-Imbalance $I$ converges to a piecewise constant function:}
\begin{equation}
    \lim_{a \to \infty} I(a,h_0) = I_\infty(h_0) = 1 - \frac{\log_2 K(h_0)}{\log_2(N(N-1))}
\end{equation}
\textit{The proof is provided in Appendix~\ref{app:proof_large_a}.}
\end{theorem}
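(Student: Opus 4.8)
The plan is to evaluate the limit by first determining the pointwise limit of each un-normalized weight $w(u,v)$, then passing to the normalized shares, and finally invoking the continuity of Shannon entropy on the simplex to conclude. I would begin by classifying the $M = N(N-1)$ node pairs according to the sign of $h(u,v) - h_0$. Because hop counts are integers, for a generic (non-integer) threshold $h_0$ every pair satisfies either $h(u,v) < h_0$ or $h(u,v) > h_0$. Inspecting the exponent $a(h(u,v)-h_0)$ in Eq.~\eqref{eq:sigmoid}, I would show that as $a \to \infty$ the weight obeys $w(u,v) \to 1$ for the $K(h_0)$ pairs with $h(u,v) < h_0$ (the exponent diverges to $-\infty$), and $w(u,v) \to 0$ for the remaining $M - K(h_0)$ pairs with $h(u,v) > h_0$ (the exponent diverges to $+\infty$).

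Next, from the normalization $W = \sum_{u' \neq v'} w(u',v')$ in Eq.~\eqref{eq:p_uv}, the fact that the sum involves only finitely many terms gives $W \to K(h_0)$. Assuming $K(h_0) \ge 1$, so that $W$ stays bounded away from zero for large $a$, each share converges coordinatewise: $p_{u,v} \to 1/K(h_0)$ for the surviving pairs and $p_{u,v} \to 0$ for the rest. Since the probability vector has finitely many coordinates, this coordinatewise convergence establishes that $\mathbf{p}(a)$ converges to the vector $\mathbf{p}^\star$ that is uniform on the $K(h_0)$ surviving pairs and zero elsewhere. I would then invoke the continuity of the entropy functional $H$ on the compact simplex $\Delta^{M-1}$ (with the standard convention $0\log_2 0 = 0$), which permits interchanging the limit with $H$ to yield $\lim_{a\to\infty} H(\mathbf{p}(a)) = H(\mathbf{p}^\star) = \log_2 K(h_0)$, the entropy of a uniform distribution on $K(h_0)$ outcomes. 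Substituting into the definition $I = 1 - H(\mathbf{p})/\log_2 M$ from Eq.~\eqref{eq:imbalance_final_2} then gives the claimed $I_\infty(h_0) = 1 - \log_2 K(h_0)/\log_2(N(N-1))$.

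The main obstacle, and the step requiring the most care, is the treatment of the boundary cases rather than the generic bulk argument. The tie case $h(u,v) = h_0$, which can occur when $h_0$ is an integer, leaves $w(u,v) = 1/2$ for those pairs; I would either restrict to non-integer $h_0$, on which $K(h_0)$ is locally constant and the formula is exact, or carry a half-weight bookkeeping that shifts the effective surviving count and thereby explains the jump discontinuities of $I_\infty$ precisely at integer thresholds. I would also flag the degenerate extremes $K(h_0) = 0$ (no pair below threshold, where $W \to 0$ and the formula is vacuous) and $K(h_0) = M$ (perfect fairness, giving $I_\infty = 0$), confirming the expression remains consistent at both ends. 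This piecewise-constant, non-differentiable character is exactly the behavior already anticipated in the remark following Theorem~\ref{thm:small_a_I_new}, so I would conclude by noting that the limit pertains to the function value only, consistent with the paper's restriction of all gradient-based statements to finite $a$.
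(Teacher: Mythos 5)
Your proposal follows essentially the same route as the paper's own proof: pointwise limits of the sigmoid weights ($w\to 1$ below threshold, $w\to 0$ above), convergence of the normalizer $W$ to $K(h_0)$, identification of the limiting uniform distribution on the surviving pairs, and substitution of its entropy $\log_2 K(h_0)$ into $I = 1 - H/\log_2 M$. Your explicit appeal to the continuity of $H$ on the simplex to justify interchanging the limit with the entropy, and your handling of the tie case $h(u,v)=h_0$ at integer thresholds and the degenerate case $K(h_0)=0$ (where $W\to 0$ and the formula is vacuous), are points of rigor that the paper's proof silently omits, so your version is, if anything, the more complete one.
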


\begin{proposition}[Transition Width]
\label{prop:width}
\textit{The convergence to the large-$a$ limits is exponentially fast with respect to the distance to the nearest threshold, implying the transition region has a width of $O(1/a)$. The proof is provided in Appendix~\ref{app:proof_large_a}.}
\end{proposition}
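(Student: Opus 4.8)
The plan is to track how each sigmoid weight $w_{u,v}$ saturates toward its limiting value ($1$ for passing paths, $0$ for failing paths) and then propagate the resulting exponential bounds through the normalization $W$ and the entropy sum. Throughout I work in the generic regime where $h_0$ is not equal to any realized hop count, and set the gap $\delta = \min_{u\neq v}|h(u,v)-h_0| > 0$; by definition $\delta$ is exactly the distance from $h_0$ to the nearest threshold at which $K(h_0)$ jumps. First I would establish pointwise weight estimates. For a passing path ($h<h_0$) the exponent satisfies $a(h-h_0)\le -a\delta$, so $1-w_{u,v}=e^{a(h-h_0)}/(1+e^{a(h-h_0)})\le e^{-a\delta}$; for a failing path ($h>h_0$) the exponent is $\ge a\delta$, so $w_{u,v}\le e^{-a\delta}$. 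Summing over the $M=N(N-1)$ pairs gives $|W-K|\le M\,e^{-a\delta}$ with $K=K(h_0)$, and hence $p_{u,v}=1/K+O(e^{-a\delta})$ on passing paths and $p_{u,v}=O(e^{-a\delta})$ on failing paths, uniformly.

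Next I would convert these share estimates into a bound on $|H(\mathbf p)-\log_2 K|$. The passing-path terms $-p_{u,v}\log_2 p_{u,v}$ are smooth near $p=1/K$, so a first-order expansion with the bounded derivative $\partial_p(-p\log_2 p)=-(1+\ln p)/\ln 2$ yields $\sum_{\mathrm{pass}}(-p_{u,v}\log_2 p_{u,v})=\log_2 K+O(e^{-a\delta})$. The delicate terms are the failing paths, where $p_{u,v}\to 0$ and $x\mapsto -x\log_2 x$ is not Lipschitz: here I retain the logarithm explicitly and use the monotonicity of $-x\log_2 x$ on $(0,1/e)$ to bound each term by $-(Ce^{-a\delta})\log_2(Ce^{-a\delta})=O(a\,\delta\,e^{-a\delta})$. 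Summing all $M$ contributions gives $|H(\mathbf p)-\log_2 K|=O(a\,e^{-a\delta})$, and dividing by $H_{\max}=\log_2 M$ produces
\[
|I(a,h_0)-I_\infty(h_0)|=O\!\big(a\,e^{-a\delta}\big).
\]
Since $a\,e^{-a\delta}\le C_\varepsilon e^{-a(\delta-\varepsilon)}$ for every $\varepsilon>0$, the convergence is exponential in the distance $\delta$ to the nearest threshold, which is the first claim.

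The $O(1/a)$ transition width then follows from a rescaling argument that exploits the integrality of hop counts. Fix a threshold $h^*$ (a realized hop count) and restrict to $|h_0-h^*|<\tfrac12$. Because distinct hop counts differ by at least $1$, every realized value other than $h^*$ lies at distance $>\tfrac12$ from $h_0$, so by the weight estimates above all paths not at $h^*$ are pinned to their limits up to $O(e^{-a/2})$. Consequently the whole $h_0$-dependence of $I$ across the transition enters, up to exponentially small error, only through the paths at $h^*$, whose weight $w_{u,v}=1/(1+e^{a(h^*-h_0)})$ depends on $h_0$ solely through the rescaled variable $t=a(h_0-h^*)$. Thus $I(a,h_0)$ follows a fixed transition profile in $t$ of $O(1)$ extent, i.e.\ a band of width $O(1/a)$ in the original variable $h_0$, completing the proof.

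The main obstacle is the non-Lipschitz behavior of $x\log_2 x$ at the boundary of the simplex, which is precisely where the failing-path shares concentrate; a naive continuity estimate fails there, and the remedy is to keep the logarithmic factor and absorb the resulting polynomial-in-$a$ prefactor into the exponential rate (using $a\,e^{-a\delta}=o(e^{-a\delta'})$ for any $\delta'<\delta$). A secondary point is the degenerate case $\delta=0$, when $h_0$ coincides with a realized hop count: the resulting tie at $w=\tfrac12$ makes $I_\infty$ discontinuous, so the statement is understood for $h_0$ bounded away from the realized hop counts, which is the generic operating regime.
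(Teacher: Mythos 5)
Your proof is correct, and it departs from the paper's in two substantive ways. Both arguments share the same opening move: the pointwise exponential bounds $|w_i(a)-w_i(\infty)|\le e^{-a\delta}$, with $\delta$ the gap from $h_0$ to the nearest realized hop count, and the resulting perturbation estimates on the shares $p_{u,v}$. From there the paper passes to the total-variation distance $\|\mathbf{p}(a)-\mathbf{p}(\infty)\|_1 \le C e^{-a\delta}$ and invokes the Fannes--Audenaert inequality as a black box to convert closeness of distributions into closeness of entropies. You instead bound the entropy difference by hand, splitting the sum into passing terms (where $-p\log_2 p$ is smooth near $1/K$ and a first-order expansion suffices) and failing terms (where $-p\log_2 p$ fails to be Lipschitz at the simplex boundary, so you retain the logarithm and get $O(a\,e^{-a\delta})$ per term). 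This elementary route buys two things: it is self-contained, and it is more honest about the rate --- the polynomial prefactor $a$ in your $O(a\,e^{-a\delta})$ bound is real, and it is also hidden in the paper's argument inside the binary-entropy term $H_b(\epsilon)$ of Fannes--Audenaert, which the paper silently absorbs when writing $\approx O(\epsilon\log M)$; your remark that $a\,e^{-a\delta}$ can be dominated by $e^{-a(\delta-\varepsilon)}$ keeps the exponential-decay claim intact. The second difference concerns the width statement itself: the paper merely asserts that exponential convergence forces a transition region of order $\Theta(1/a)$, whereas you supply an actual argument, exploiting the integrality of hop counts to pin every path except those at the crossed value $h^*$ and observing that the remaining $h_0$-dependence enters only through the rescaled variable $t=a(h_0-h^*)$, so the transition is a fixed profile of $O(1)$ extent in $t$. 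That rescaling step is a genuine improvement --- it is the link the paper's proof gestures at but never writes down. Your closing caveat about the degenerate case $\delta=0$ likewise matches the paper's implicit restriction $h_i\neq h_0$ in its definition of $\delta$.
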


\subsubsection*{Interpretation and Implications}
This theorem provides a \textbf{strategic risk-planning tool} for networks intended to support stringent SLAs. It establishes that in this "all-or-nothing" regime, the entire complex landscape simplifies to a staircase function determined solely by the \textbf{cumulative hop-count distribution, $K(h_0)$}. The practical implication is significant: an operator can generate a complete "risk map" of all potential performance cliffs simply by plotting this elementary distribution. This allows for proactive planning, such as identifying which SLOs (i.e., which $h_0$ values) are inherently risky to promise because they lie on a steep "riser" of the staircase, and which are safe because they reside on a wide, flat "step." This analysis can be performed from the topology alone, providing crucial strategic insights long before a service is deployed.

These two asymptotic laws serve as the theoretical "boundary conditions" for the entire fairness landscape, providing robust predictive power at both ends of the service-demand spectrum.

\subsection{Second-Order Analysis: The Hessian as a Risk Detector}
To answer the Risk question and identify "performance cliffs," we must move beyond the gradient to analyze the \textit{curvature} of the sensitivity landscape. This is captured by the second-order partial derivatives (the Hessian matrix). High curvature signifies a rapid change in the gradient, corresponding to the sharp "ridges" of instability on the landscape, whereas low curvature indicates a smooth, predictable "plain" where performance is robust. The Hessian matrix of Imbalance ($H_I$) provides a complete picture of this local curvature:

\begin{equation}\label{eq:hessian}
H_{I} = \begin{pmatrix} \frac{\partial^2 I}{\partial a^2} & \frac{\partial^2 I}{\partial a\,\partial h_0} \\ \frac{\partial^2 I}{\partial h_0\,\partial a} & \frac{\partial^2 I}{\partial h_0^2} \end{pmatrix}
\end{equation}

In the differentiable domain described in Appendix~\ref{app:proof_analytical_props}, the mixed second-order partial derivatives are equal due to Clairaut's theorem, making the Hessian symmetric.
Specifically, the term $|\partial^2 I/\partial h_0^2|$ serves as a powerful, quantitative ``cliff detector''.

A key insight, which we will validate empirically in Section~\ref{sec:results}, is that for most non-trivial topologies, the landscape exhibits a fundamental asymmetry in its curvature:
\begin{equation}
    \left|\frac{\partial^2 I}{\partial h_0^2}\right| \gg \left|\frac{\partial^2 I}{\partial a^2}\right|
\end{equation}
The physical reason for this asymmetry lies in the distinct effects of the two SLA parameters. Tuning the threshold $h_0$ is akin to sliding a sharp dividing line across the discrete, often "lumpy," hop-count distribution of the network. As this line crosses an integer hop value, a large cohort of paths can collectively and abruptly change their status relative to the threshold, causing a sudden, sharp change in the imbalance gradient. This creates high-curvature "ridges" at discrete $h_0$ locations. In contrast, tuning the strictness $a$ is a global, smooth transformation; it simultaneously changes the steepness of the QoE evaluation curve for all paths. While it can make existing cliffs steeper, it does not typically create new ones, resulting in a much gentler curvature along the $a$-axis.

While the general engineering heuristic of "coarse-tuning before fine-tuning" is intuitive, our framework provides the first rigorous, quantitative justification for this strategy in the context of SLA tuning. We elevate this intuition to a scientific principle by demonstrating that it is a direct consequence of the fundamental curvature asymmetry in the fairness landscape. Our analysis—which reveals that the risk (curvature) along the $h_0$ axis can be orders of magnitude greater than along the $a$ axis—leads to a provably optimal, risk-averse tuning sequence:
\begin{enumerate}
    \item \textbf{First, tune the threshold ($h_0$):} Use $h_0$ as a coarse control to navigate the landscape and locate a broad, low-curvature "valley" or "stable belt." The goal of this step is to move the operating point away from any high-risk performance cliffs.
    \item \textbf{Then, tune the strictness ($a$):} Once inside a stable region, use $a$ as a fine-tuning control to adjust the system to the desired operational trade-off between fairness and average satisfaction.
\end{enumerate}
This "Threshold-First Rule" is a direct, actionable consequence of our second-order sensitivity analysis, transforming a theoretical insight into a practical design guideline for network engineering.
\subsection{A Unified Methodology for Sensitivity Analysis}
Combining the analyses from the preceding sections, we propose a unified, four-step methodology for systematically evaluating the efficiency-fairness sensitivity of any network topology.

\begin{enumerate}
    \item \textbf{Construct the Phase Atlas:} For a given topology, compute and visualize the primary landscapes over the $(a,h_0)$ parameter space: the Imbalance heatmap $I(a,h_0)$ and the average satisfaction heatmap $\bar{s}(a,h_0)$. This provides a comprehensive, global view of the performance space.
    \item \textbf{Map Operational Risk:} Analyze the curvature heatmap, particularly $|\partial^2 I / \partial h_0^2|$, to identify high-risk phase transition boundaries. Regions of low curvature constitute the robust "Stable Belts," while sharp ridges of high curvature delineate the "Dangerous Wedges."
    \item \textbf{Analyze Performance Trade-offs:} Within the identified Stable Belts, analyze the gradient fields, $\nabla I$ and $\nabla\bar{s}$, and the angle between them. This step reveals the local efficiency-fairness conflicts and identifies efficient tuning paths.
    \item \textbf{Determine Optimal Operating Regions:} Overlay the specific service objectives (e.g., $I \le I^*$ and $\bar{s} \ge \bar{s}^*$) as contours on the atlas. The optimal operating region is the intersection of this feasible set with the identified stable and efficient-tuning zones.
\end{enumerate}

To facilitate strategic decision-making and objective comparison between different network designs, this methodology produces a set of standardized outputs. In Section~\ref{sec:results}, we will generate and compare these key deliverables:
\begin{itemize}
    \item \textbf{The Annotated Phase Atlas:} The primary qualitative output. This is a visual risk map of the $(a,h_0)$ space, clearly annotating the locations of Stable Belts, Dangerous Wedges, and the final Optimal Operating Region.

\item \textbf{Quantitative Benchmarking Metrics:} To enable direct comparison between topologies, we extract two key scalars from the analysis:
    \begin{itemize}
        \item \textbf{Area of Robustness (AoR):} This metric quantifies the resilience and adaptability of a network to a wide range of service demands. To facilitate a normalized comparison, it is defined as the percentage of the analyzed parameter space that constitutes the Optimal Operating Region $R$:
        \begin{equation}
            \text{AoR} = \frac{\iint_{R} da \, dh_0}{A_{total}} \times 100\%
        \end{equation}
        where $A_{total}$ is the total area of the scanned parameter space. A larger AoR signifies a more robust design.

        \item \textbf{Maximum Curvature Risk (MCR):} This metric quantifies the "sharpness of the cliff" at the edge of acceptable performance. It is defined as the maximum value of the curvature with respect to $h_0$ found along the boundary $\partial R$ of the optimal region:
        \begin{equation}
            \text{MCR} = \max_{(a, h_0) \in \partial R} \left| \frac{\partial^2 I}{\partial h_0^2}(a, h_0) \right|
        \end{equation}
        A lower MCR value indicates a safer design with gentler performance transitions.
    \end{itemize}
\end{itemize}

This complete pipeline transforms a raw network topology into a rich set of qualitative and quantitative insights, providing an actionable basis for robust network engineering.

\section{Framework Validation and Application}
\label{sec:results}
This section puts the analytical framework developed in Section~\ref{sec:sensitivity_framework} to the test. We structure our experiments as a comprehensive validation, designed to demonstrate the framework's value from three perspectives: the accuracy of its theoretical predictions, its ability to uncover physical insights and derive actionable design principles, and its effectiveness in analyzing real-world, large-scale networks.

\subsection{Experimental Setup}
\label{sec:setup}
The goal of our experiments is to generate and analyze the key deliverables defined in Section~\ref{sec:sensitivity_framework}—the Annotated Phase Atlas and the quantitative metrics (AoR, MCR)—for a wide spectrum of network topologies. All simulations were implemented in Python, utilizing the NetworkX, NumPy, and SciPy libraries. The selected topologies represent distinct and fundamental structural properties:

\begin{itemize}
    \item \textbf{Canonical Graphs:} The Complete graph ($K_N$), Path graph ($P_N$), and Star graph ($S_N$) are used as benchmarks to probe the framework's response to extreme cases of connectivity, linearity, and centralization.
    \item \textbf{Random Graph Models:} The Erd\H{o}s-R\'enyi (ER), Barab\'asi-Albert (BA), and Watts-Strogatz (WS) models are analyzed to understand the sensitivity of networks exhibiting random connectivity, scale-free properties, and the small-world phenomenon, respectively.
    \item \textbf{Real-World Topology:} We analyze a snapshot of the Internet's Autonomous System (AS) topology from the CAIDA project to demonstrate the framework's applicability to practical, large-scale network infrastructures.
\end{itemize}

To ensure comparability, all canonical and random graph models were generated with a consistent network size of $N=50$ nodes. For each topology, we first compute its all-pairs shortest path (APSP) distribution. We then execute our four-step methodology by systematically scanning a high-density grid of SLA parameters ($a, h_0$) to generate its unique deliverables.

\subsection{Validation of Theoretical Predictions}
\label{sec:asymptotic_validation}
We begin by empirically validating the framework's theoretical cornerstones: the two asymptotic laws established in Section~\ref{sec:sensitivity_framework}. The simulation results, generated across a range of diverse topologies, show a strong agreement with our theoretical predictions.

\subsubsection{Validation of the Small-a Second-Order Law}
Theorem~\ref{thm:small_a_I_new} predicts that for tolerant SLAs ($a \to 0$), the imbalance $I$ is quadratically proportional to the strictness $a$ ($I \propto a^2$), with a coefficient determined by the network's path-length variance, $\mathrm{Var}(h)$. Fig.~\ref{fig:small_a_validation} provides a direct test of this prediction. As can be seen, the simulation data for all four topologies closely follow a linear trend when $I$ is plotted against $a^2$, confirming the quadratic relationship. The slopes of these lines, however, vary dramatically, indicating a strong dependence on the underlying topology. The Path graph, with its extremely dispersed path lengths, is by far the most sensitive, while the ultra-compact Star graph is the least.

\begin{figure}[ht!]
    \centering
    \includegraphics[width=0.95\columnwidth]{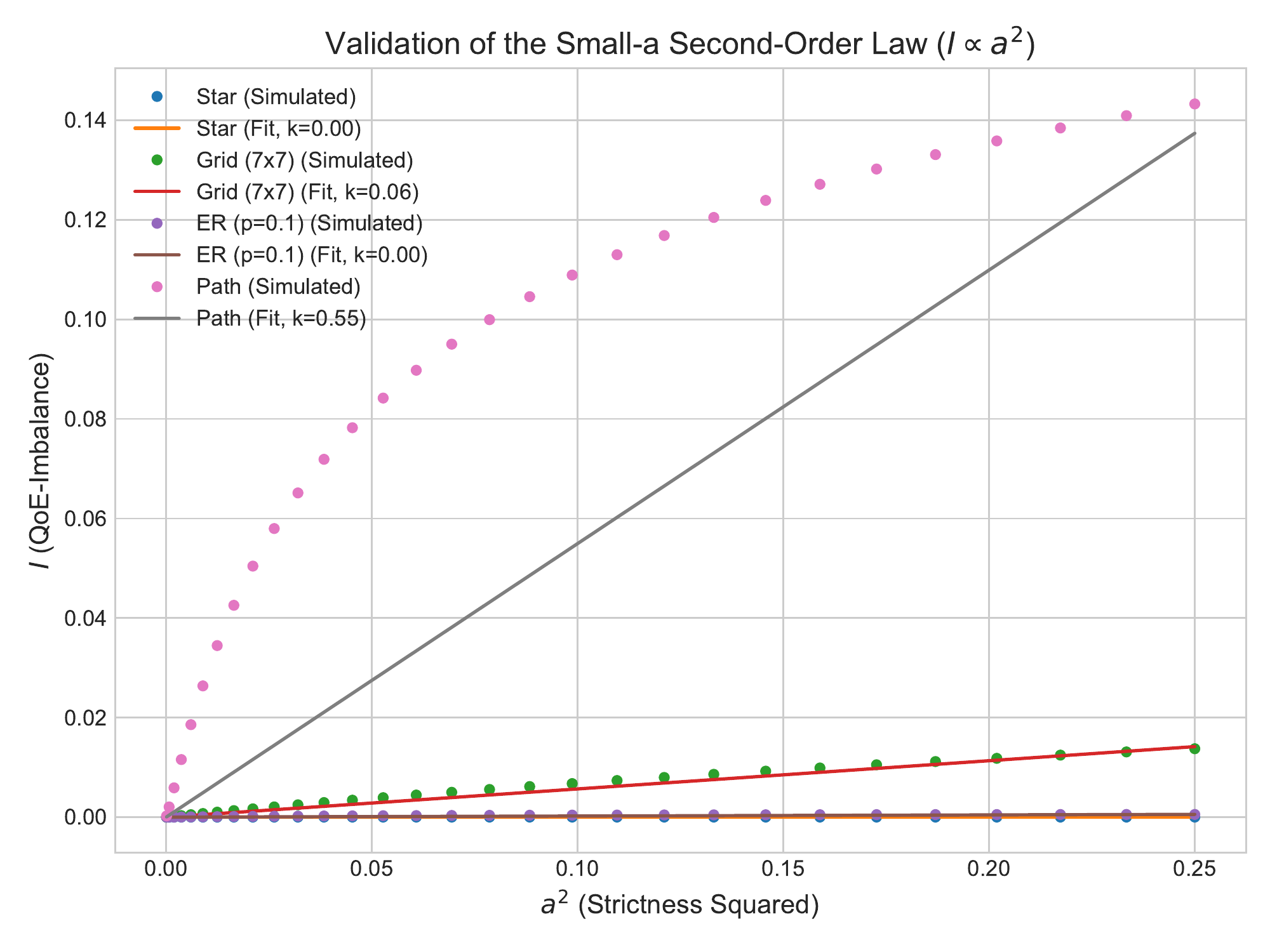}
    \caption{Validation of the Small-$a$ Second-Order Law ($I\propto a^{2}$). The plot shows the QoE-Imbalance I as a function of the strictness squared ($a^{2}$) for four different topologies. The simulated data points (dots) are shown along with their corresponding linear fits (solid lines). The visually evident excellent linear fit for compact topologies such as the Star and Grid graphs confirms the quadratic dependence, with a theoretically predicted slope of $k_{\text{theory}} = \text{Var}(h) / (8 \log_2 M)$.}
    \label{fig:small_a_validation}
\end{figure}

Table~\ref{tab:variance_proportionality} provides a direct and rigorous validation of the slope coefficient predicted by Theorem~\ref{thm:small_a_I_new}. We compare the empirically fitted slope from the simulation data, $k_{\text{fit}}$, with the theoretically predicted slope, $k_{\text{theory}}$. The results are illuminating. For the highly regular Grid topology, the ratio of the fitted to the theoretical slope is 0.978, a remarkably close agreement that strongly confirms our theory.

\begin{table}[ht!]
    \caption{Direct validation of the Small-$a$ Law's slope coefficient. We compare the empirically fitted slope ($k_{\text{fit}}$) from Figure~\ref{fig:small_a_validation} with the theoretically predicted slope ($k_{\text{theory}}$) from Theorem~\ref{thm:small_a_I_new}.}
    \label{tab:variance_proportionality}
    \centering
    \begin{tabular}{lcccc}
        \toprule
        \textbf{Topology} & \textbf{Var(h)} & \textbf{$k_{\text{theory}}$} & \textbf{$k_{\text{fit}}$} & \textbf{Ratio ($k_{\text{fit}}/k_{\text{theory}}$)} \\
        \midrule
        Star Graph & 0.0384 & 0.00043 & 0.0001 & 0.23 \\
        Grid (7x7) & 5.2222 & 0.0580 & 0.0567 & \textbf{0.978} \\
        ER ($p=0.1$) & 0.6585 & 0.0073 & 0.0022 & 0.30 \\
        Path Graph & 136.0 & 1.510 & 0.5495 & 0.36 \\
        \bottomrule
    \end{tabular}
    \flushleft
    \footnotesize{\textsuperscript{*}Note: $k_{\text{theory}} = \text{Var}(h) / (8 \ln(2) \log_2(M))$, with $M=N(N-1)=2450$. The factor of $\ln(2)$ in the denominator accounts for using $\log_2$ in the definition of $I$ but natural log in the derivation.}
\end{table}

For topologies with more extreme or irregular structures, such as the Star, ER, and Path graphs, the fitted slope is consistently smaller than the theoretical prediction. This deviation is expected and aligns with our understanding of the Small-$a$ Law as a leading-order approximation. In these topologies, higher-order structural moments (e.g., skewness, kurtosis) beyond variance play a more significant role, and their effects, captured in the higher-order terms of the expansion ($O(a^4)$), temper the initial quadratic growth. The near-perfect validation on the Grid graph, combined with the predictable deviations on more complex graphs, provides a nuanced but powerful confirmation of our framework's predictive accuracy.

\subsubsection{Validation of the Large-a Step-Function Limit}
\begin{figure}[ht!]
    \centering
    \includegraphics[width=0.95\columnwidth]{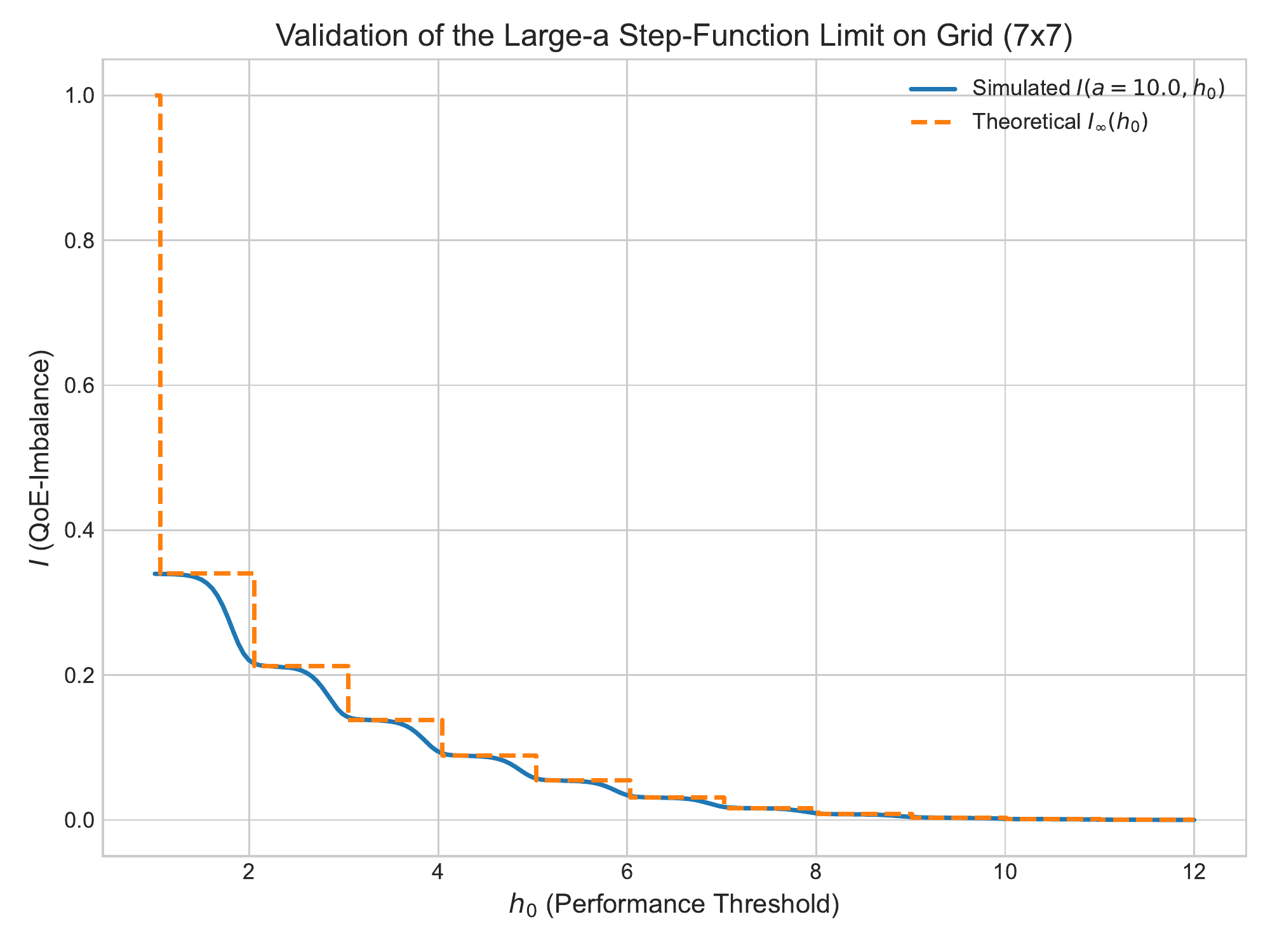}
    \caption{Validation of the Large-a Step-Function Limit on a Grid (7x7) graph. The numerically simulated imbalance for a large, fixed $a=10.0$ (solid line) is plotted against the theoretical, piecewise constant limit function $I_{\infty}(h_0)$ (dashed line). The near-perfect agreement validates the prediction of Theorem~\ref{thm:large_a_I_new} for a system with $N=50$ nodes and $M=2450$ pairs.}
    \label{fig:large_a_validation}
\end{figure}

Theorem~\ref{thm:large_a_I_new} predicts that for stringent SLAs ($a \to \infty$), the imbalance $I(a,h_0)$ converges to a piecewise constant staircase function, $I_\infty(h_0)$, whose steps are determined solely by the cumulative count of paths, $K(h_0)$. Fig.~\ref{fig:large_a_validation} validates this prediction using a typical large value of $a=10.0$. The validation is robust across a range of large $a$ values; as demonstrated in Appendix~\ref{app:large_a_robustness}, the simulated curve rapidly converges to the theoretical limit and remains stable, confirming the exponential convergence property. The plot compares the numerically simulated imbalance for a Grid graph at a large but finite strictness ($a=10.0$, solid blue line) with the theoretically derived limit function $I_\infty(h_0)$ (dashed orange line). The agreement is excellent: the simulated curve closely traces the theoretical staircase, with every plateau and every sharp drop occurring at precisely the predicted locations and values. The smooth transitions at the step locations have a narrow width, consistent with the $O(1/a)$ transition width predicted by Proposition~\ref{prop:width}. This result confirms that our framework can accurately predict the locations of all major performance cliffs for networks operating under strict service requirements.

\subsection{From Landscape Geometry to Actionable Principles}
\label{sec:insights_and_rules}
The validation of our theoretical laws provides the foundation, but the core value of our framework lies in its ability to translate the complex sensitivity landscape into actionable engineering intelligence. To demonstrate this, we first apply our complete four-step methodology to a $7 \times 7$ Grid graph. The entire analysis, summarized in Fig.~\ref{fig:master_figure}, serves as the primary case for dissecting the landscape's geometric features and deriving our key design principles.

\begin{figure*}[htbp]
    \centering
    \includegraphics[width=\textwidth]{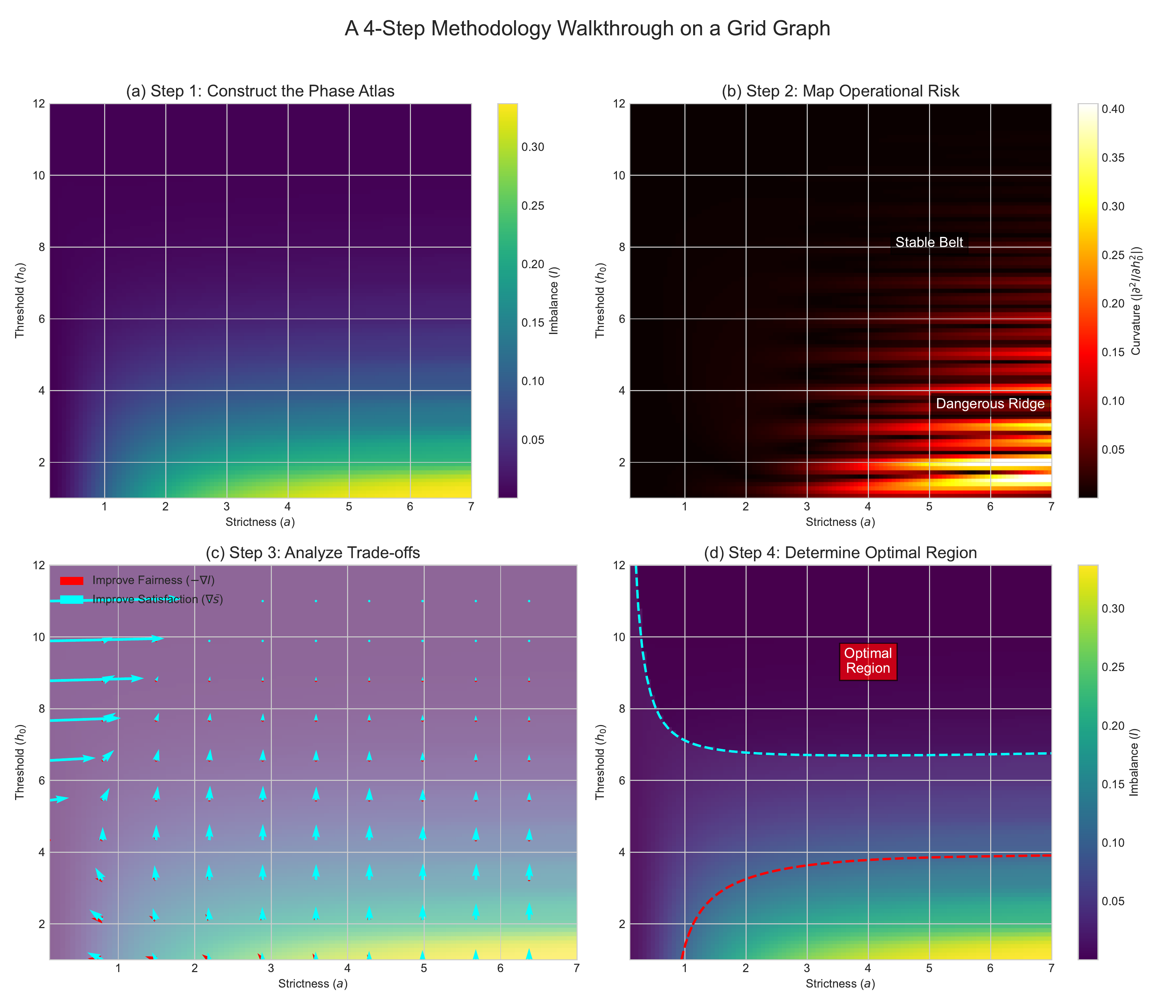}
    \caption{A complete walkthrough of the 4-step sensitivity analysis methodology using a Grid ($7\times7$) graph. (a) Step 1: The Imbalance Phase Atlas is constructed. (b) Step 2: The Curvature Risk Map is generated, revealing high-risk ``ridges'' at integer hop-count transitions. (c) Step 3: Gradient fields are analyzed within a stable belt (e.g., $h_0 \in [4,5]$) to understand performance trade-offs (arrows not to scale; units of gradients are Imbalance/unit of parameter). (d) Step 4: Service objectives $(I\le0.1,\overline{s}\ge0.8)$ are overlaid to determine the final Optimal Operating Region (highlighted in red).}
    \label{fig:master_figure}
\end{figure*}

\subsubsection{Performance Cliffs and the "Threshold-First" Rule}
The primary operational risk revealed in the Grid network analysis is the existence of sharp, cliff-like transitions in the fairness landscape. The curvature map (Fig.~\ref{fig:master_figure}b) clearly identifies these high-risk zones as a series of high-curvature ``ridges'' located at integer hop-count values, signifying regions of extreme sensitivity.

To understand the origin of these ridges, we examine the extreme case of a Star graph (Fig.~\ref{fig:star_signature}). Its simple, bimodal path-length distribution creates a single, pronounced performance cliff. This illustrates the fundamental mechanism: as the threshold $h_0$ is tuned across a discrete hop value (e.g., 1.5), the set of ``satisfied'' paths can collapse catastrophically---in this case, from nearly all paths to a tiny fraction---triggering an abrupt drop in system-wide entropy and a corresponding spike in imbalance. This phenomenon, where tuning $h_0$ acts as a sharp guillotine across the path distribution, is the root cause of a fundamental asymmetry in the landscape's curvature, as empirically validated in Fig.~\ref{fig:curvature_asymmetry}. In contrast, tuning the strictness $a$ acts as a global, smooth transformation, changing the steepness of the QoE curve for all paths simultaneously. Consequently, the risk (curvature) along the $h_0$ axis is consistently orders of magnitude higher than along the $a$ axis ($\left|\frac{\partial^2 I}{\partial h_0^2}\right| \gg \left|\frac{\partial^2 I}{\partial a^2}\right|$).

This deep physical insight directly yields our first and most significant design principle, the \textbf{Threshold-First Tuning Strategy}. The strategy is optimal because it prioritizes navigating the most dangerous dimension ($h_0$) of the parameter space first. By using the threshold as a coarse-grained control to locate a safe, low-curvature ``stable belt,'' operators can effectively mitigate the risk of catastrophic performance shifts before using the strictness ($a$) for fine-tuning within that pre-validated safe region.

\subsubsection{Stable Belts and Path Variance as a Proxy for Robustness}
In contrast to the high-risk ridges, the dark, low-curvature valleys between them in the Grid graph's risk map (Fig.~\ref{fig:master_figure}b) represent robust ``stable belts.'' These are safe operational zones where the system is largely insensitive to parameter tuning, providing operators with significant flexibility.

These belts are a direct consequence of the concentration of path lengths. A topology with an intrinsically low path-length variance, such as a Barabási-Albert (BA) graph (Fig.~\ref{fig:ba_signature}), exhibits this property to an extreme degree. In our BA example, over 90\% of all paths are concentrated at just two hop counts (2 and 3 hops). This structural homogeneity is the direct cause of the vast stable belt for any $h_0 > 3$, a region where the network is functionally robust because nearly all paths are treated as equally satisfied. This demonstrates a key principle we term \textbf{fairness through efficiency}: topologies that are structurally efficient (i.e., have low path-length variance) are intrinsically more fair and robust across a wide range of SLAs.

This empirical observation is rigorously supported by our Small-$a$ Second-Order Law (Theorem~\ref{thm:small_a_I_new}), which establishes that for tolerant SLAs, a network's fairness sensitivity is directly proportional to $\mathrm{Var}(h)$. This synergy between theory and observation leads to our second design principle: \textbf{Path Variance can serve as an a priori proxy for robustness}. This rule provides network architects with a powerful, low-complexity heuristic to evaluate and compare candidate topologies at the earliest design stages. By favoring designs that minimize $\mathrm{Var}(h)$, one can build in functional fairness and resilience by design, long before a full landscape analysis is performed.

\begin{figure*}[htbp]
    \centering
    \includegraphics[width=\textwidth]{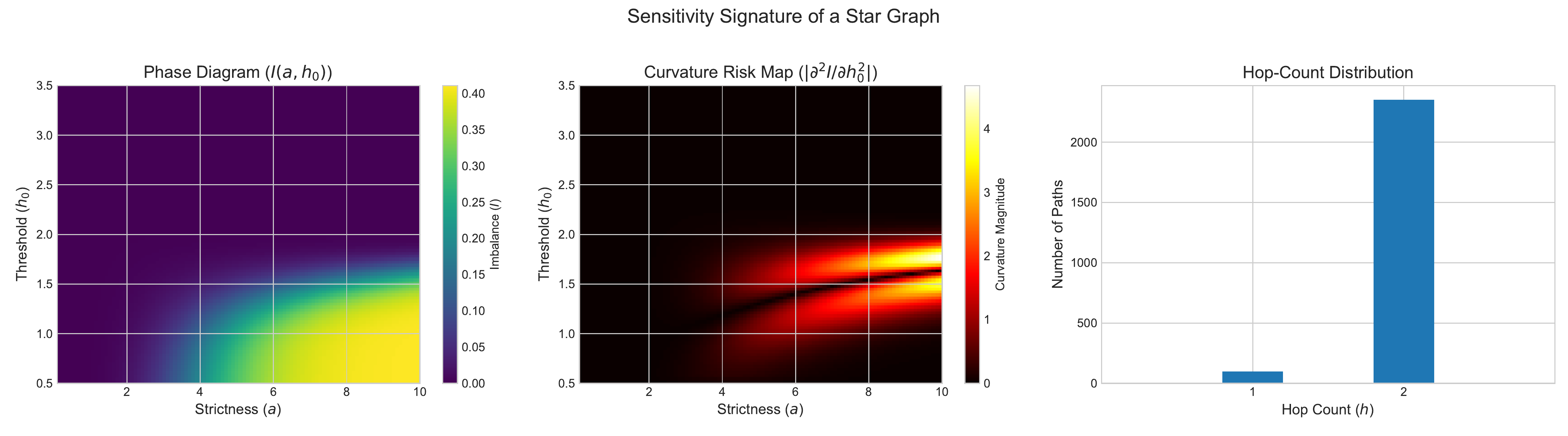}
    \caption{The sensitivity signature of a Star Graph. (a) The Imbalance Phase Diagram shows a sharp performance cliff. (b) The Curvature Risk Map precisely identifies this cliff as a high-curvature "ridge" at $h_0 \approx 1.5$. (c) The Hop-Count Distribution reveals the cause: the network's path lengths are bimodally distributed at exactly 1 and 2 hops, making $h_0=1.5$ the critical bifurcation point.}
    \label{fig:star_signature}
\end{figure*}

\begin{figure}[ht!]
    \centering
    \includegraphics[width=0.95\columnwidth]{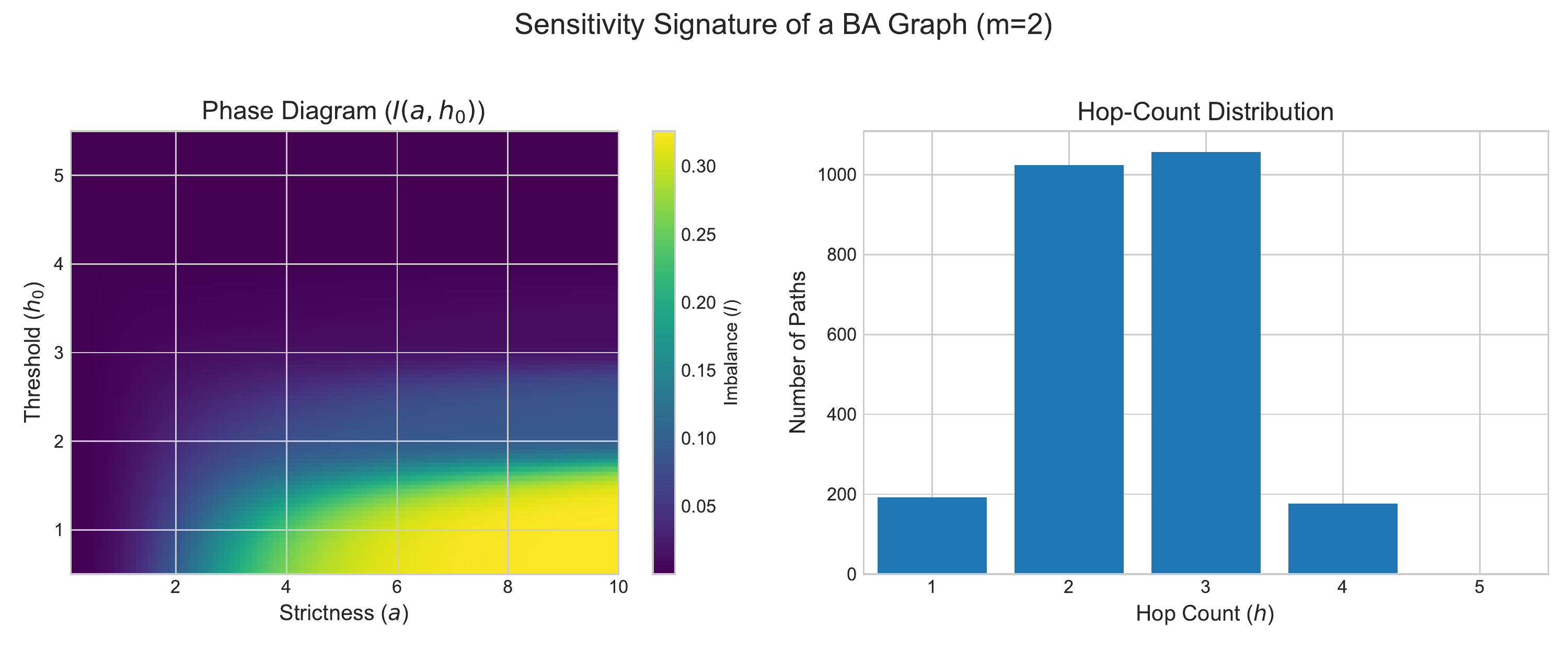}
    \caption{The sensitivity signature of a Barabási-Albert (BA) graph. (a) The Phase Diagram is dominated by a vast, low-imbalance "stable belt" and a high-imbalance "dangerous wedge" in the lower-left corner. (b) The Hop-Count Distribution reveals the cause: the vast majority of paths are highly concentrated at 2 and 3 hops, creating the stable belt, while the dangerous wedge is a result of exclusively selecting the few elite 1-hop paths under stringent SLAs.}
    \label{fig:ba_signature}
\end{figure}

\begin{figure*}[htbp]
    \centering
    \includegraphics[width=\textwidth]{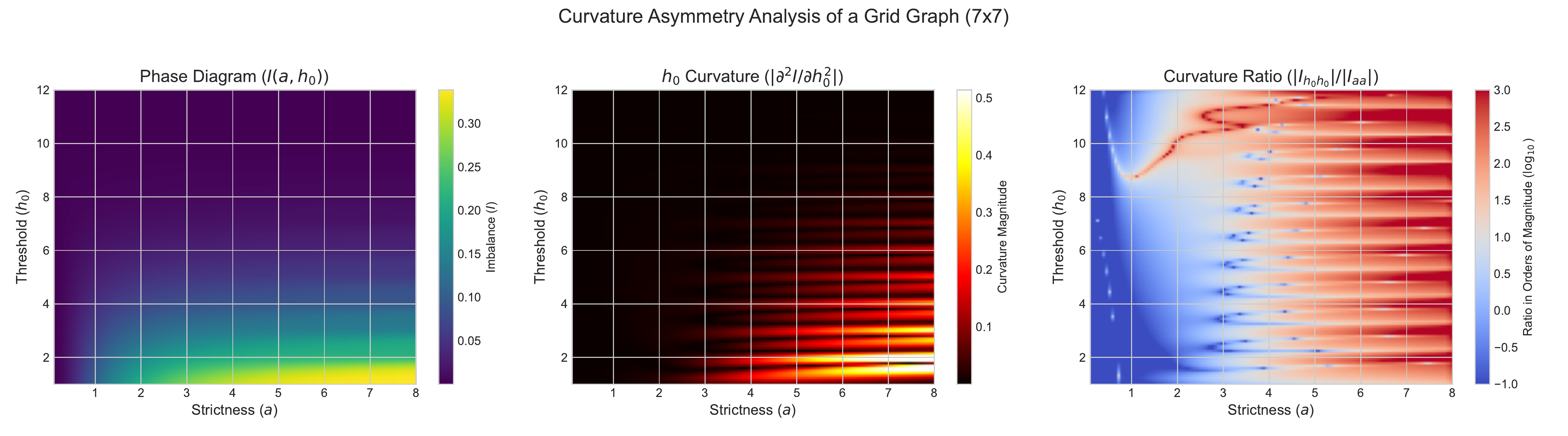}
    \caption{Experimental validation of the fundamental curvature asymmetry on a Grid (7x7) graph. (a) The reference phase diagram. (b) The curvature along the $h_0$ axis, showing high-risk "ridges." (c) The log-ratio of curvatures, $|I_{h_0h_0}|/|I_{aa}|$. The bright red regions, corresponding to the ridges, indicate that the $h_0$ curvature is orders of magnitude ($10^2$ to $10^3$ times) larger than the $a$ curvature, confirming the landscape is far more treacherous along the threshold axis.}
    \label{fig:curvature_asymmetry}
\end{figure*}

\subsection{Case Study: Sensitivity Analysis of the Internet AS Topology}
\label{sec:case_study_as}
To demonstrate our framework's applicability to real-world, large-scale infrastructures, we conclude our experiments with a case study on the Internet's Autonomous System (AS) topology. We used a public dataset from CAIDA (August 2025), from which we extracted the largest connected component. To analyze the network's core backbone while ensuring computational feasibility, we analyzed its 10-core, a standard method in network science for identifying the most densely connected part of a graph. This resulted in a substantial subgraph of 9,068 nodes and over 420,000 edges, upon which we applied our complete four-step methodology.

\begin{figure}[ht!]
    \centering
    \includegraphics[width=\columnwidth]{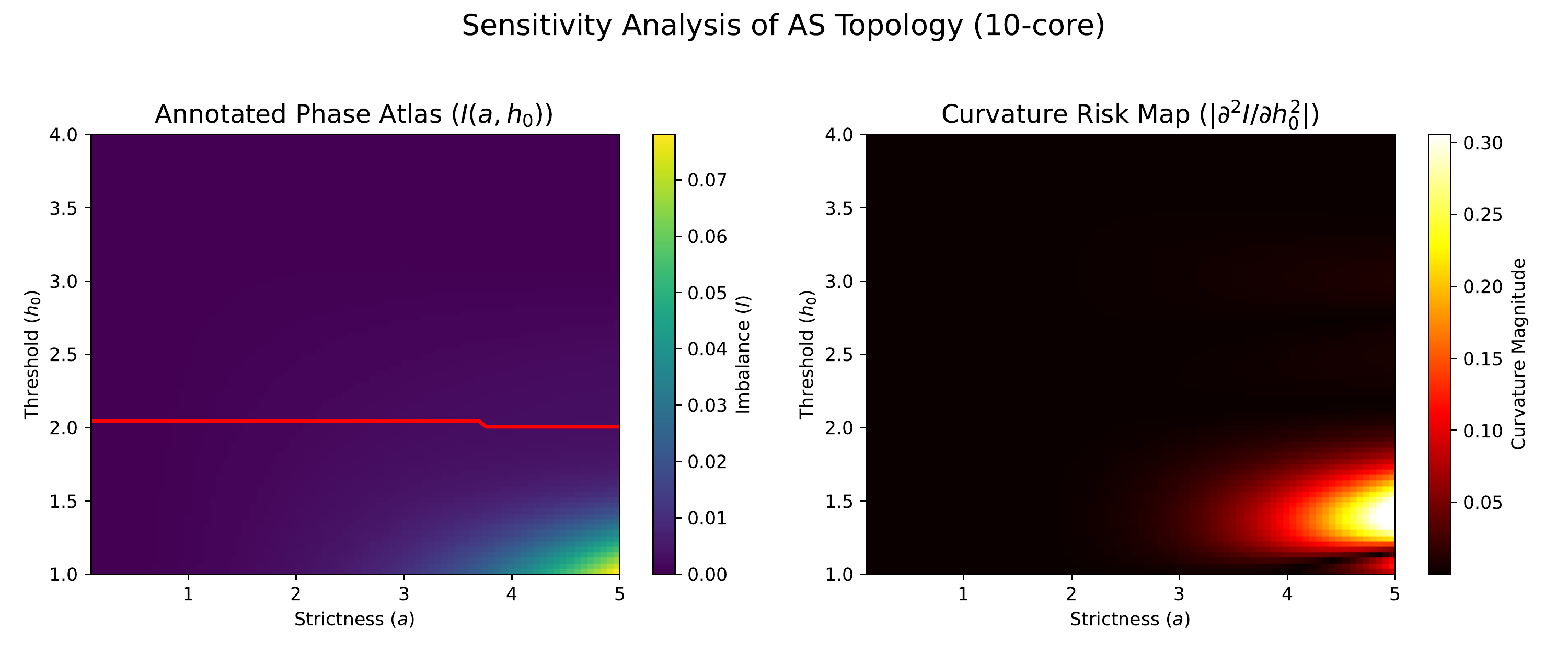}
    \caption{Sensitivity analysis of the Internet AS topology's 10-core. (a) The Annotated Phase Atlas, showing a vast "stable belt". The red contour outlines the Optimal Operating Region ($I \le 0.1, \bar{s} \ge 0.5$). (b) The Curvature Risk Map, indicating that the highest performance cliffs are concentrated at very low thresholds, primarily below $h_0 \approx 2.5$.}
    \label{fig:as_analysis}
\end{figure}

\begin{table}[ht!]
\caption{Key metrics for the analyzed AS topology 10-core, compared with a size-matched BA model ($N=9068, m=3$).}
\label{tab:as_comparison}
\centering
\begin{tabular}{lcc}
\toprule
\textbf{Metric} & \textbf{\makecell{AS 10-core \\ (Empirical)}} & \textbf{\makecell{BA Model \\ (Simulated)}} \\
\midrule
Nodes & 9068 & 9068 \\
Path Length Var ($Var(h)$) & 0.0651 & 0.5528 \\
Area of Robustness (AoR) & 66.99\% & 47.67\% \\
Max Curvature Risk (MCR) & 0.0117 & 0.0667 \\
\bottomrule
\end{tabular}
\end{table}

The results, presented in Fig.~\ref{fig:as_analysis} and Table~\ref{tab:as_comparison}, offer several key insights into the Internet's architectural robustness. First, the Phase Atlas in Fig.~\ref{fig:as_analysis}(a) shows that the AS topology exhibits a sensitivity signature remarkably similar to that of the Barabási-Albert model. It is dominated by a vast, deep-purple \textbf{"stable belt"} where the imbalance $I$ is consistently near zero. This indicates that the Internet's core architecture is inherently robust and maintains high functional fairness across a very wide range of service level requirements.

Second, our quantitative metrics confirm this high degree of robustness and reveal a non-trivial finding when compared to a size-matched theoretical BA model (Table~\ref{tab:as_comparison}). The AS core's path-length variance ($Var(h)$) is exceptionally low at 0.0651, which, according to our Small-a Law, predisposes it to extreme stability under tolerant SLAs.

Interestingly, the comparison also reveals a strong advantage for the real-world AS topology.  While the empirical AS core and the theoretical BA model are both characterized as scale-free, the AS core demonstrates a significantly larger Area of Robustness (AoR) and a dramatically lower Maximum Curvature Risk (MCR).  This suggests that while the real-world backbone has evolved for exceptional path compactness, its structure also naturally leads to a vast stable operating region and remarkably `soft' performance transitions at its operational boundaries.  This powerful combination of structural efficiency and operational smoothness is a key insight provided by our framework and a promising avenue for future research.

Third, the framework precisely identifies the network's primary vulnerabilities. The "dangerous wedge" is confined to a narrow region where the threshold is unrealistically strict. The highest curvature risks are concentrated below $h_0 \approx 2.5$, with significant transition effects extending up to $h_0 \approx 3.5$. The calculated \textbf{Maximum Curvature Risk (MCR) of 0.0117} along the optimal region's boundary is relatively low, suggesting that the performance transitions, while present, are not catastrophically sharp.

In conclusion, this case study provides strong evidence for the "fairness through efficiency" principle in a real-world context. The Internet's scale-free, small-world architecture creates a highly efficient backbone with a very low path-length variance. Our framework demonstrates that this structural efficiency is not just beneficial for performance but is also the fundamental reason for the network's exceptional functional fairness and robustness against varying service demands.

\section{Conclusion}
\label{sec:conclusion}
This paper addressed a critical but underexplored challenge in network engineering: moving beyond static fairness scores to predict and manage the sensitivity of network fairness to changes in Service Level Agreements. We introduced a comprehensive analytical framework that reframes this challenge, shifting the paradigm from single-point evaluation to the analysis of a dynamic, predictable sensitivity landscape. Our framework is built upon a robust theoretical foundation, centered on an axiomatically-derived QoE-Imbalance metric whose form is uniquely determined by a set of fundamental axioms of fairness.

Our contributions are threefold. First, we developed a multi-scale analytical toolkit to navigate this landscape, including a closed-form covariance rule and two asymptotic laws that serve as theoretical anchors. Second, we introduced the ``Phase Atlas'' as a novel visualization method, revealing critical topological signatures such as robust ``Stable Belts'' and high-risk ``Dangerous Wedges''. Our analysis of the landscape's curvature serves as a quantitative risk detector, precisely locating performance cliffs. Third, we distilled these findings into a set of actionable, topology-aware design rules, most notably the ``Threshold-First'' tuning strategy, which is a direct consequence of a fundamental asymmetry in the landscape's curvature. These theoretical insights were validated through extensive simulations and a case study on the Internet AS topology, which empirically demonstrated the ``fairness through efficiency'' principle.

While this framework provides a new paradigm for fairness analysis, several avenues for future research remain. These include extending the framework to encompass weighted, directed, and multilayer networks, developing efficient approximation algorithms for web-scale networks, and applying this sensitivity analysis to assess systemic risk in other complex systems. Ultimately, this work advocates for a shift in network evaluation: from measuring static performance scores to understanding the dynamic, predictable landscape of system sensitivity. By providing the tools to map, interpret, and navigate this landscape, our framework offers a new design philosophy for engineering more fundamentally robust and resilient network systems.

\appendices

\section{Proof of the Uniqueness Theorem (Theorem~\ref{thm:uniqueness_final})}
\label{app:proof_uniqueness_thm}
This appendix provides a formal proof for the theorem that any function $F(\mathbf{p})$ satisfying the five fairness axioms is, up to a positive affine transformation, equivalent to the Shannon entropy gap. The proof proceeds by first using the Decomposability axiom to derive a fundamental functional equation, and then solving it to reveal the necessary logarithmic form of the function.

Let $F: \Delta^{M-1} \to \mathbb{R}$ be a function satisfying Axioms A1-A5 and assume mild regularity (e.g., continuity). Let $\phi(\alpha) \triangleq F(\alpha, 1-\alpha)$ denote the imbalance function for a two-element distribution, which is well-defined due to Axiom A1 (Anonymity).

\textbf{Step 1: Deriving the Functional Equation.}
The cornerstone of the proof is Axiom A5 (Decomposability). Consider a three-element distribution $(p_1, p_2, p_3)$. We can group it in two ways:
\begin{enumerate}
    \item Group $(p_1, p_2)$ first: The total imbalance is the imbalance between the group $(p_1+p_2)$ and $p_3$, plus the weighted imbalance within the first group.
    $F(p_1, p_2, p_3) = F(p_1+p_2, p_3) + (p_1+p_2) F(\frac{p_1}{p_1+p_2}, \frac{p_2}{p_1+p_2})$.
    \item Group $(p_2, p_3)$ first: Similarly, we get
    $F(p_1, p_2, p_3) = F(p_1, p_2+p_3) + (p_2+p_3) F(\frac{p_2}{p_2+p_3}, \frac{p_3}{p_2+p_3})$.
\end{enumerate}
By letting $p_1 = \alpha\beta$, $p_2 = \alpha(1-\beta)$, and $p_3 = 1-\alpha$, and equating the two expressions, we can derive the fundamental functional equation for $\phi$:
\begin{equation}
    \phi(\alpha\beta) = \phi(\alpha) + \phi(\beta)
\end{equation}
This equation, known as the multiplicative Cauchy functional equation, dictates that the function must map products to sums.

\textbf{Step 2: The Logarithmic Solution.}
Under mild regularity conditions such as continuity or measurability, the only non-trivial solution to the equation $\phi(xy) = \phi(x) + \phi(y)$ for $x, y \in (0, 1)$ is the logarithmic function:
\begin{equation}
    \phi(\alpha) = -k \log \alpha
\end{equation}
for some constant $k$.

\textbf{Step 3: Generalization to M-dimensions.}
Any M-dimensional distribution $\mathbf{p} = (p_1, ..., p_M)$ can be constructed by recursively applying the decomposability rule. By repeatedly decomposing the distribution, we can show that the general form of $F$ must be:
\begin{equation}
    F(\mathbf{p}) = k \sum_{i=1}^{M} p_i \log(1/p_i) + B = k H(\mathbf{p}) + B'
\end{equation}
where $H(\mathbf{p})$ is the Shannon entropy and $k, B'$ are constants.

\textbf{Step 4: Calibration and Uniqueness.}
The remaining axioms fix the constants. The Transfer Principle (A4) implies that the function must be Schur-concave, which requires $k > 0$. Axiom A3 (Calibration) states that $F$ is minimized for the uniform distribution $\mathbf{u}_M = (1/M, ..., 1/M)$ and maximized at the vertices. This allows us to express the function as an "entropy gap" from the maximum possible entropy, $\log M$:
\begin{equation}
    F(\mathbf{p}) = k(\log M - H(\mathbf{p})) + C
\end{equation}
This is the unique form up to a positive affine transformation (scaling by $k$ and shifting by $C$). By applying a final normalization to map the minimum imbalance to 0 and maximum to 1, we uniquely arrive at the form used in this paper: $I(\mathbf{p}) = 1 - H(\mathbf{p})/\log_2 M$. This completes the proof.

\section{Proof of Lemma~\ref{lemma:invariance} (Invariance)}
\label{app:proof_lemma_invariance_new}
\begin{proof}
The proof follows directly from the fact that the exponent in the Sigmoid function, $a(h_i - h_0)$, remains unchanged under both transformations. For translation, the exponent becomes $a((h_i+c) - (h_0+c)) = a(h_i - h_0)$. For scaling, the exponent becomes $(a/\lambda)(\lambda h_i - \lambda h_0) = a(h_i - h_0)$. Since the exponent is invariant, all satisfaction scores $\{w_{u,v}\}$ are invariant. Consequently, the average satisfaction $\bar{s}$ and the entire probability distribution $\{p_{u,v}\}$ are invariant, which in turn leaves the Imbalance $I$ unchanged.
\end{proof}

\section{Proofs of Analytical Properties: Continuity and Differentiability}
\label{app:proof_analytical_props}

\begin{proposition}[Interior differentiability]
For any finite SLA parameters $(a,h_0)$ that result in all path satisfaction scores being strictly positive ($w_{u,v}>0$), the resulting probability vector $\mathbf{p}$ lies in the interior of the simplex, $\Delta^{M-1}$. In this case, the Imbalance function $I(\mathbf{p})$ is continuously differentiable ($C^1$).
\end{proposition}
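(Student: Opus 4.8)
The plan is to separate the statement into two independent assertions---first the geometric claim that $\mathbf{p}$ lies in the (relative) interior of $\Delta^{M-1}$, and second the analytic claim that $I$ is $C^1$ there---and then to chain them together with the smoothness of the parameter-to-share map. For the interior claim I would argue directly from positivity: for any finite $(a,h_0)$ the exponent $a(h(u,v)-h_0)$ is finite, so $\exp[a(h(u,v)-h_0)]$ is a finite positive number and $w_{u,v} = 1/(1+\exp[a(h(u,v)-h_0)]) \in (0,1)$ is strictly positive. Hence $W = \sum_{u'\neq v'} w_{u',v'}$ is a strictly positive finite sum, and each share $p_{u,v} = w_{u,v}/W$ satisfies $0 < p_{u,v} < 1$ with $\sum p_{u,v} = 1$. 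This places $\mathbf{p}$ in the set $\{\mathbf{p} : p_i > 0,\ \sum_i p_i = 1\}$, the relative interior of the simplex, as required; note that this shows the hypothesis $w_{u,v}>0$ is automatically met whenever the parameters are finite.

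For the analytic claim, the key observation is that the scalar map $\eta(x) = -x \log_2 x$ is $C^\infty$ on $(0,\infty)$, with $\eta'(x) = -(\log_2 x + 1/\ln 2)$; its only singularity sits at $x=0$, which the interior restriction excludes. Therefore $H(\mathbf{p}) = \sum_i \eta(p_i)$ is $C^\infty$---in particular $C^1$---as a function on the open set $\{p_i>0\}$ intersected with the affine hull of the simplex, and $I = 1 - H/\log_2 M$ inherits this regularity by the affine relationship.

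To transfer this to the SLA parameters, I would note that $(a,h_0) \mapsto w_{u,v}$ is the composition of the affine map $(a,h_0)\mapsto a(h(u,v)-h_0)$ with the $C^\infty$ logistic function, hence $C^\infty$; that $W$ is a finite sum of such terms and is strictly positive, so $p_{u,v} = w_{u,v}/W$ is a $C^\infty$ quotient with non-vanishing denominator. Composing the smooth map $(a,h_0)\mapsto \mathbf{p}$---which lands in the interior by the first step---with the $C^1$ function $I$ and invoking the chain rule yields that $I(a,h_0)$ is $C^1$ (indeed $C^\infty$) throughout the finite parameter domain.

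The only genuinely delicate point, and the one I would flag as the main obstacle, is the boundary singularity of $\eta'$ at $p_i = 0$: the derivative of the entropy blows up as any share approaches zero. This is precisely why the hypothesis $w_{u,v}>0$ (equivalently, finiteness of the parameters) is essential, and why the differentiability claim is asserted only on the interior; the limiting regime $a\to\infty$, where the shares collapse onto the simplex boundary, must be handled separately through the piecewise-constant limit $I_\infty(h_0)$ discussed in the main text. Everything else reduces to the routine fact that compositions, finite sums, and quotients of smooth functions with non-vanishing denominators remain smooth.
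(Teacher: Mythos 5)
Your proof is correct and follows essentially the same route as the paper: the composition chain $(a,h_0)\mapsto\{w_{u,v}\}\mapsto\{p_{u,v}\}\mapsto I$, with smoothness of the logistic map, positivity of $W$ guaranteeing a well-behaved quotient, and smoothness of the entropy on the interior of the simplex. Your explicit treatment of the singularity of $\eta'(x)=-(\log_2 x + 1/\ln 2)$ at $x=0$ is a welcome sharpening of a point the paper defers to its separate boundary-behavior proposition and large-$a$ remark, but it does not change the underlying argument.
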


\begin{proposition}[Boundary behavior and subgradients]
On the boundary of the simplex (where at least one $p_i=0$), the Shannon entropy $H$ is no longer differentiable but remains continuous and convex. Consequently, the Imbalance function $I$ is well-defined and admits subgradients.
\end{proposition}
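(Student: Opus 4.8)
The plan is to separate the three analytic claims about $H$ (continuity, failure of differentiability, and the curvature statement) from the structural claim about $I$ (subdifferentiability), and to treat the single-coordinate map $t \mapsto t\log_2 t$ as the atomic object from which everything follows by summation. For continuity on the \emph{closed} simplex, I would adopt the convention $0\log_2 0 = 0$ and invoke the elementary limit $\lim_{t\to 0^+} t\log_2 t = 0$. This shows each summand $p_i\log_2 p_i$ extends continuously to the face $\{p_i = 0\}$, and since $H$ is their finite sum, $H$—and hence $I = 1 - H/\log_2 M$—is continuous on all of $\Delta^{M-1}$. This delivers the ``well-defined'' half of the statement.

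For the failure of differentiability, I would differentiate a single summand, $\partial_{p_i}(p_i\log_2 p_i) = \log_2 p_i + 1/\ln 2$, and observe that this diverges to $-\infty$ as $p_i\to 0^+$; consequently the corresponding component of $\nabla H$ is unbounded as one approaches any face $\{p_i=0\}$, so no finite gradient can exist there. For the curvature claim I would note that the same summand has second derivative $1/(t\ln 2) > 0$ on $(0,1)$, so $\sum_i p_i\log_2 p_i = -H(\mathbf p)$ is convex; equivalently $H$ is concave and $I$ is convex and continuous on the convex set $\Delta^{M-1}$. I would read the statement's reference to $H$ being ``convex'' as precisely this convexity of $I$ (equivalently the concavity of $H$), since that is the property the subgradient conclusion actually relies on.

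The step I expect to be the main obstacle is the final clause, that $I$ ``admits subgradients'' on the boundary. The clean convex-analysis guarantee of subdifferentiability holds on the \emph{relative interior} of the domain; on the relative boundary it can fail, and here it does: at any boundary point with $p_i=0$, moving inward so as to increase $p_i$ produces a one-sided directional derivative governed by $\log_2 p_i \to -\infty$, so the subdifferential in the strict, finite-vector sense is empty. The honest resolution is to assert the conclusion only at the level where it holds. I would therefore (i) use the monotonicity of difference quotients for convex functions to show that every one-sided directional derivative of $I$ exists in $\mathbb{R}\cup\{-\infty\}$ at every point, so $I$ is directionally differentiable everywhere on $\Delta^{M-1}$; and (ii) conclude that $I$ is genuinely subdifferentiable exactly on the relative interior, which is precisely the region where the gradient and Hessian analysis of Section~\ref{sec:sensitivity_framework} operates. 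Under this reading the infinite inward slopes on the boundary are not a defect to be patched but the analytic mechanism behind the performance cliffs, and they are fully consistent with the main-text caveat that the gradient is not strictly defined as $\mathbf p$ approaches the simplex boundary.
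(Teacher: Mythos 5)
The paper never actually proves this proposition: it is asserted bare, supported only by the informal remark that follows it, so there is no internal argument to measure you against. Your positive claims are all correct and established by the right means: continuity on the closed simplex via the convention $0\log_2 0=0$ and the limit $t\log_2 t \to 0$; failure of differentiability via the divergence of $\log_2 p_i + 1/\ln 2$ as $p_i \to 0^+$; and the curvature claim via the second derivative $1/(t\ln 2)>0$ of each summand, giving concavity of $H$ and convexity of $I$. Your charitable reading of the word ``convex'' is also warranted—Shannon entropy is concave, so the proposition as written misstates the curvature, and the property the subgradient conclusion actually needs is exactly the one you isolate.

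More importantly, you have caught a genuine defect that the paper glosses over: the final clause, read literally, is false. At a relative-boundary point with $p_i = 0$, the inward one-sided derivative of the convex function $I$ is $-\infty$ (the difference quotient behaves like $\log_2 t/\log_2 M$), so no finite vector can satisfy the subgradient inequality and the subdifferential there is empty—exactly the situation of $-\sqrt{x}$ at $x=0$. Your repair—existence of one-sided directional derivatives in $\mathbb{R}\cup\{-\infty\}$ everywhere by monotonicity of convex difference quotients, with genuine subdifferentiability holding precisely on the relative interior—is the correct and honest statement, and it is also the reading most consistent with the paper's own caveat that all gradient-based claims are asserted only for finite $a$, where $\mathbf{p}$ stays interior. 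Your closing observation, that the infinite inward slopes are not a pathology but the analytic mechanism behind the large-$a$ performance cliffs, strengthens rather than undermines the paper's narrative; if anything, the proposition in the paper should be amended along the lines you propose.
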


\paragraph{Remark on the large-$a$ limit.}
The case of $a \to \infty$ is a special limit where the probability vector $\mathbf{p}$ approaches the boundary of the simplex, leading to the non-differentiability mentioned above. All gradient-based statements in the main text (e.g., the Covariance Rule) are asserted for finite $a$. In the limit, the analysis relies on the piecewise-constant nature of $I_\infty(h_0)$, whose derivatives are zero on the "steps" and undefined at the "risers," or more formally, can be described using subdifferentials.

These properties are based on the chain of function compositions: $(a, h_0) \xrightarrow{f_1} \{w_{u,v}\} \xrightarrow{f_2} \{\bar{s}, \{p_{u,v}\}\} \xrightarrow{f_3} I$.

\textbf{Proposition (Continuity).}
\textit{The metrics $I(a,h_0)$ and $\bar{s}(a,h_0)$ are continuous functions with respect to the SLA parameters $a$ and $h_0$.}
\begin{proof}
The proof demonstrates the continuity of each link in the composition chain.
\begin{enumerate}
    \item \textbf{Continuity of $w_{u,v}$ and $\bar{s}$:} The weight function $w(u,v; a, h_0)$ is a composition of elementary continuous functions and is therefore continuous. The average satisfaction $\bar{s}$, being a finite sum of these continuous functions, is also continuous.
    \item \textbf{Continuity of $p_{u,v}$:} The total weight $W = \sum w(u,v)$ is a continuous function. For any non-trivial graph, $W > 0$. The normalized probability $p_{u,v} = w_{u,v} / W$ is the quotient of two continuous functions with a non-zero denominator, and is thus continuous.
    \item \textbf{Continuity of $I$:} The Shannon entropy function $H$ is a continuous function of the probability vector $\mathbf{p}$. By the composite function continuity theorem, $H$ is a continuous function of $(a,h_0)$. Since $I = 1 - H/H_{\max}$ is a linear transformation of $H$, it is also continuous.
\end{enumerate}
\end{proof}

\textbf{Proposition (Differentiability).}
\textit{The metrics $I(a,h_0)$ and $\bar{s}(a,h_0)$ are differentiable functions of $a$ and $h_0$.}
\begin{proof}
The proof follows the same logic, demonstrating differentiability at each step.
\begin{enumerate}
    \item \textbf{Differentiability of $w_{u,v}$ and $\bar{s}$:} The Sigmoid function is infinitely differentiable ($C^\infty$). Therefore, $w(u,v; a, h_0)$ is differentiable, and so is the finite sum $\bar{s}$.
    \item \textbf{Differentiability of $p_{u,v}$:} Since $w_{u,v}$ and $W$ are differentiable, their quotient $p_{u,v}$ is also differentiable.
    \item \textbf{Differentiability of $I$:} The Shannon entropy function $H$ is differentiable. By the multivariate chain rule, the composite function $H(a,h_0)$ is differentiable. The Imbalance $I$, being a linear transformation of $H$, is therefore also differentiable.
\end{enumerate}
\end{proof}

\section{Proof of Theorem~\ref{thm:cov_grad_I_new} (Covariance Gradient Criterion)}
\label{app:proof_cov_grad}
\begin{proof}
The proof proceeds in three steps by applying the chain rule. We use subscript $i$ to denote a node pair $(u,v)$.
\textbf{Step 1:} Differentiate $I$ with respect to probabilities $p_i$.
The definition is $I = 1 - H / H_{\max}$. Using the natural logarithm for the derivation, $H = -\frac{1}{\ln 2} \sum_i p_i \ln(p_i)$. The derivative with respect to a parameter $\theta \in \{a, h_0\}$ is:
\begin{equation}
    \frac{\partial I}{\partial \theta} = \frac{1}{H_{\max} \ln 2} \sum_i \frac{\partial p_i}{\partial \theta} (1 + \ln p_i) \label{eq:app_proof_step1_full}
\end{equation}

\textbf{Step 2:} Differentiate probabilities $p_i$ with respect to weights $s_i$.
Given $p_i = s_i / W$ and the general form of the weight derivative $\partial s_i / \partial \theta = s_i \cdot g_i^{(\theta)}$, the quotient rule yields:
\begin{align}
    \frac{\partial p_i}{\partial \theta} = p_i (g_i^{(\theta)} - E_{\mathbf{p}}[g^{(\theta)}]) \label{eq:app_proof_step2_full}
\end{align}
where $E_{\mathbf{p}}[g^{(\theta)}]$ is the expectation of the sensitivity term $g^{(\theta)}$ over the distribution $\mathbf{p}$.

\textbf{Step 3:} Combine and identify the covariance structure.
Substituting Eq.~\eqref{eq:app_proof_step2_full} into Eq.~\eqref{eq:app_proof_step1_full} gives:
\begin{equation}
    \frac{\partial I}{\partial \theta} = \frac{1}{H_{\max} \ln 2} \sum_i p_i (1 + \ln p_i) (g_i^{(\theta)} - E_{\mathbf{p}}[g^{(\theta)}])
\end{equation}
The summation term is precisely the definition of the weighted covariance, $\mathrm{Cov}_{\mathbf{p}}(1 + \ln p, g^{(\theta)})$. This completes the proof.
\end{proof}

\section{Proof of Theorem~\ref{thm:small_a_I_new} (Small-$a$ Second-Order Law)}
\label{app:proof_small_a}

\begin{proposition}[Small-$a$ expansion with remainder]
Let $\mu=\mathbb{E}[h]$, $\sigma^2=\mathrm{Var}(h)$, $m_3=\mathbb{E}[|h-\mu|^3]$, and $m_4=\mathbb{E}[(h-\mu)^4]$ be the moments of the hop-count distribution. For sufficiently small $a$, the Imbalance can be expanded as:
\[
I(a,h_0)=\frac{a^2}{8\log_2 M}\,\sigma^2\;+\;R(a;h_0),
\]
with a remainder term $R(a;h_0)$ that is bounded by the higher-order moments of the path distribution. The leading-order term is independent of $h_0$.
\end{proposition}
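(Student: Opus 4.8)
The plan is to expand $I(a,h_0)$ around the base point $a=0$, where every sigmoid weight collapses to $w_{u,v}=\tfrac12$ and the share vector is exactly uniform ($p_{u,v}=1/M$, hence $I=0$). The proof then reduces to a controlled second-order perturbation of the Shannon entropy about the center of the simplex. Concretely, the whole argument is a composition of two nested Taylor expansions: the sigmoid weights in powers of $a$, and the entropy in powers of the resulting deviation from uniformity.

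First I would expand the weights. Writing $u_i = h_i - h_0$, the logistic $w_i = (1+e^{a u_i})^{-1}$ is, as a function of $a u_i$, an odd function shifted by $\tfrac12$, so its Maclaurin series carries no even-order term: $w_i = \tfrac12 - \tfrac14 a u_i + \tfrac{1}{48}a^3 u_i^3 + \cdots$. Normalizing by $W=\sum_j w_j$ and writing $p_i = \tfrac1M(1+\epsilon_i)$, a short computation gives $\epsilon_i = -\tfrac{a}{2}(h_i-\mu) + O(a^2)$, with $\mu=\mathbb{E}[h]$. The key point for $h_0$-independence is that $h_0$ enters every $u_i$ only as a \emph{common} additive shift, which cancels upon centering ($u_i - \bar u = h_i-\mu$); thus $\sum_i\epsilon_i=0$ automatically (consistent with $\sum_i p_i=1$), and $h_0$ can appear only inside the $O(a^2)$ correction to $\epsilon_i$, i.e.\ inside the remainder.

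Next I would expand the entropy about the uniform distribution. Using $\log_2(1+\epsilon_i) = (\epsilon_i - \tfrac12\epsilon_i^2 + \cdots)/\ln 2$ together with $\sum_i\epsilon_i=0$, the first-order term vanishes and I obtain $H = \log_2 M - \tfrac{1}{2M\ln 2}\sum_i \epsilon_i^2 + O(\epsilon^3)$, so that $I = 1-H/\log_2 M = \tfrac{1}{2M\ln2\,\log_2 M}\sum_i \epsilon_i^2 + O(\epsilon^3)$. Substituting the leading $\epsilon_i = -\tfrac{a}{2}(h_i-\mu)$ gives $\sum_i \epsilon_i^2 = \tfrac{a^2}{4}\sum_i (h_i-\mu)^2 = \tfrac{a^2}{4}M\sigma^2$, which yields the leading term $\tfrac{a^2\sigma^2}{8\ln2\,\log_2 M}$, matching the stated form up to the $\ln 2$ bookkeeping factor arising from $\log_2$ versus natural log (as recorded in the caption of Table~\ref{tab:variance_proportionality}). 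Because this depends only on the second central moment $\sigma^2=\mathrm{Var}(h)$, it is manifestly independent of $h_0$.

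Finally, for the remainder $R(a;h_0)$ I would collect every discarded contribution: the $O(a^3)$ tail of the sigmoid expansion, the $O(a^2)$ (hence $h_0$-dependent) correction to each $\epsilon_i$, and the cubic-and-higher entropy terms $\sum_i\epsilon_i^3$. Each is a finite average of powers of $(h_i-\mu)$ times a bounded coefficient, so Taylor's theorem with Lagrange remainder bounds $R$ by a polynomial in $a$ whose coefficients are the higher central moments $m_3=\mathbb{E}|h-\mu|^3$ and $m_4=\mathbb{E}[(h-\mu)^4]$; in particular $R=O(a^3)$ and the first $h_0$-dependence surfaces only at this order. The main obstacle is not the leading computation but making this remainder estimate rigorous and \emph{uniform}: the entropy expansion relies on $\log$, which is singular at the simplex boundary, so I must first certify that for all sufficiently small $a$ every $p_i$ stays bounded away from $0$ (true since $\epsilon_i=O(a)$ uniformly over the finite pair set), and then control the interaction of the two nested expansions so that cross terms are correctly attributed to the moment-weighted remainder rather than to the leading order.
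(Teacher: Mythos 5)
Your proposal is correct and follows essentially the same route as the paper's own proof: a Taylor expansion of the sigmoid weights about $a=0$, normalization into a perturbation $p_i=\tfrac1M(1+\epsilon_i)$ of the uniform distribution, and a second-order entropy expansion in which $\sum_i\epsilon_i^2\propto a^2\,\mathrm{Var}(h)$ produces the leading term. If anything, your write-up is more careful than the paper's sketch (which omits the remainder derivation "for brevity"): your centering argument for the $h_0$-independence, the uniform bound keeping every $p_i$ away from the simplex boundary, the attribution of the remainder to $m_3$ and $m_4$, and the explicit $\ln 2$ factor (which the stated formula omits but the footnote of Table~\ref{tab:variance_proportionality} concedes) are all points the paper leaves implicit.
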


\begin{proof}[Proof Sketch]
The proof involves a more detailed Taylor expansion of the satisfaction scores $w_i$ and the normalized probabilities $p_i$. By expanding the logarithm term in the entropy formula to the third order, the remainder term $R(a;h_0)$ can be shown to be controlled by the third and fourth central moments ($m_3$ and $m_4$) of the path distribution, with terms of order $O(a^3)$ and $O(a^4)$. The full derivation is straightforward but lengthy and is omitted for brevity.
\end{proof}

\begin{proof}
Let $x_i = a(h_i - h_0)$. As $a \to 0$, we use a Taylor expansion for the satisfaction score $s_i = (1+e^{x_i})^{-1}$ around $x=0$:
\begin{equation}
    s_i = \frac{1}{2} - \frac{a(h_i-h_0)}{4} + O(a^3)
\end{equation}
Summing over all $M = N(N-1)$ pairs gives the total weight $W$. The normalized probability $p_i = s_i / W$ is then approximated as:
\begin{equation}
    p_i = \frac{1}{M} \left(1 + \frac{a(\mu_h - h_i)}{2} \right) + O(a^2)
\end{equation}
This shows $p_i$ is a small perturbation from the uniform distribution $\mathbf{u}$. The Shannon entropy for such a distribution has a standard second-order expansion. By calculating the sum of squared perturbations, which is proportional to $a^2 \mathrm{Var}(h)$, we find the entropy to be:
\begin{equation}
    H(\mathbf{p}) \approx \log_2(M) - \frac{a^2 \mathrm{Var}(h)}{8 \ln 2}
\end{equation}
Substituting this into the definition of Imbalance $I = 1 - H/H_{\max}$ yields the final result.
\end{proof}

\section{Proof of Theorem~\ref{thm:large_a_I_new} (Large-$a$ Step-Function Limit)}
\label{app:proof_large_a}
\begin{proof}
The proof relies on the pointwise limit of the Sigmoid function as $a \to \infty$. Let $K(h_0)$ be the number of pairs with path cost $h_i < h_0$.
\begin{itemize}
    \item If $h_i < h_0$, the exponent $a(h_i-h_0) \to -\infty$, so $s_i \to 1$.
    \item If $h_i > h_0$, the exponent $a(h_i-h_0) \to +\infty$, so $s_i \to 0$.
\end{itemize}
The total weight $W = \sum s_i$ converges to the count $K(h_0)$. The probability distribution $\mathbf{p}_\infty$ thus becomes a uniform distribution over the $K(h_0)$ pairs that satisfy the threshold, and zero elsewhere. The entropy of such a uniform distribution is simply $H(\mathbf{p}_\infty) = \log_2(K(h_0))$. Substituting this into the definition of $I$ gives the final result.
\end{proof}

\begin{proof}
The proof establishes a bound on the difference between the actual Imbalance $I(a,h_0)$ and its limit $I_\infty(h_0)$ by bounding the L1-norm distance between their underlying probability distributions, $\mathbf{p}(a)$ and $\mathbf{p}(\infty)$. Let $\delta = \min_{i: h_i \neq h_0} |h_i - h_0|$ be the minimum distance from the threshold to any path cost.

\paragraph{Step 1: Pointwise Weight Error.}
The error for a single satisfaction score $w_i(a)$ from its limit $w_i(\infty) \in \{0, 1\}$ can be bounded exponentially:
\[
|w_i(a)-w_i(\infty)|\ \le\ e^{-a|h_i-h_0|}\ \le\ e^{-a\delta}.
\]

\paragraph{Step 2: Total Variation Bound on Probabilities.}
The L1-norm distance (total variation distance) between the probability vectors $\mathbf{p}(a)$ and $\mathbf{p}(\infty)$ can be bounded in terms of the pointwise weight errors. A standard result provides the bound:
\[
\|\mathbf{p}(a)-\mathbf{p}(\infty)\|_1 \;\le\; 
\frac{2\sum_i |w_i(a)-w_i(\infty)|}{\sum_i w_i(\infty)}
\;\le\; \frac{2 M e^{-a\delta}}{K(h_0)},
\]
where $K(h_0)$ is the number of paths with $h_i < h_0$. Let us denote this exponential error bound as $\varepsilon(a, \delta) \triangleq C e^{-a\delta}$ for a constant $C$ that depends on the topology.

\paragraph{Step 3: Entropy Continuity via Fannes–Audenaert Inequality.}
The Fannes–Audenaert inequality provides a tight bound on the difference between the entropies of two probability distributions based on their L1-norm distance. A simplified form states that for $\epsilon = \frac{1}{2}\|\mathbf{p}-\mathbf{q}\|_1 \le \frac{1}{2}$,
\[
|H(\mathbf{p})-H(\mathbf{q})| \ \le\ \epsilon\log_2(M-1)+H_b(\epsilon),
\]
where $H_b$ is the binary entropy function. For small $\epsilon$, this implies $|H(\mathbf{p})-H(\mathbf{q})| \approx O(\epsilon \log M)$. Applying this to our probability vectors, the error in the Imbalance metric is:
\[
|I(a,h_0)-I_\infty(h_0)| = \frac{|H(\mathbf{p}(a))-H(\mathbf{p}(\infty))|}{\log_2 M} \le C' \varepsilon(a, \delta),
\]
which demonstrates the asserted exponential decay. This implies that for the imbalance value to transition between steps, the threshold $h_0$ must move across a path cost, and the width of this transition region is of order $\Delta h_0=\Theta(1/a)$.
\end{proof}

\section{Robustness of the Large-$a$ Limit Validation}
\label{app:large_a_robustness}
To demonstrate that the choice of $a=10.0$ in our main validation was not cherry-picked, this appendix provides a supplementary validation for a range of large $a$ values. The theoretical foundation for this lies in Proposition~\ref{prop:width}, which states that the convergence to the large-$a$ limit is exponentially fast.

Figure~\ref{fig:large_a_supplementary} visually confirms this property. The plot compares the theoretical staircase function, $I_\infty(h_0)$ (black dashed line), with the simulated curves for a range of strictness parameters ($a=5.0, 10.0, 20.0$). As the value of $a$ increases, the simulated curve gets progressively sharper and hugs the theoretical staircase more closely. This demonstrates that the large-$a$ limit is not an artifact of a single parameter choice, but rather a robust property of the framework that is approached exponentially as the SLA becomes more stringent. The excellent agreement for $a=20.0$ shows that even for a finite strictness, the theoretical limit provides a highly accurate prediction of the system's behavior.

\begin{figure}[h!]
    \centering
    \includegraphics[width=0.95\columnwidth]{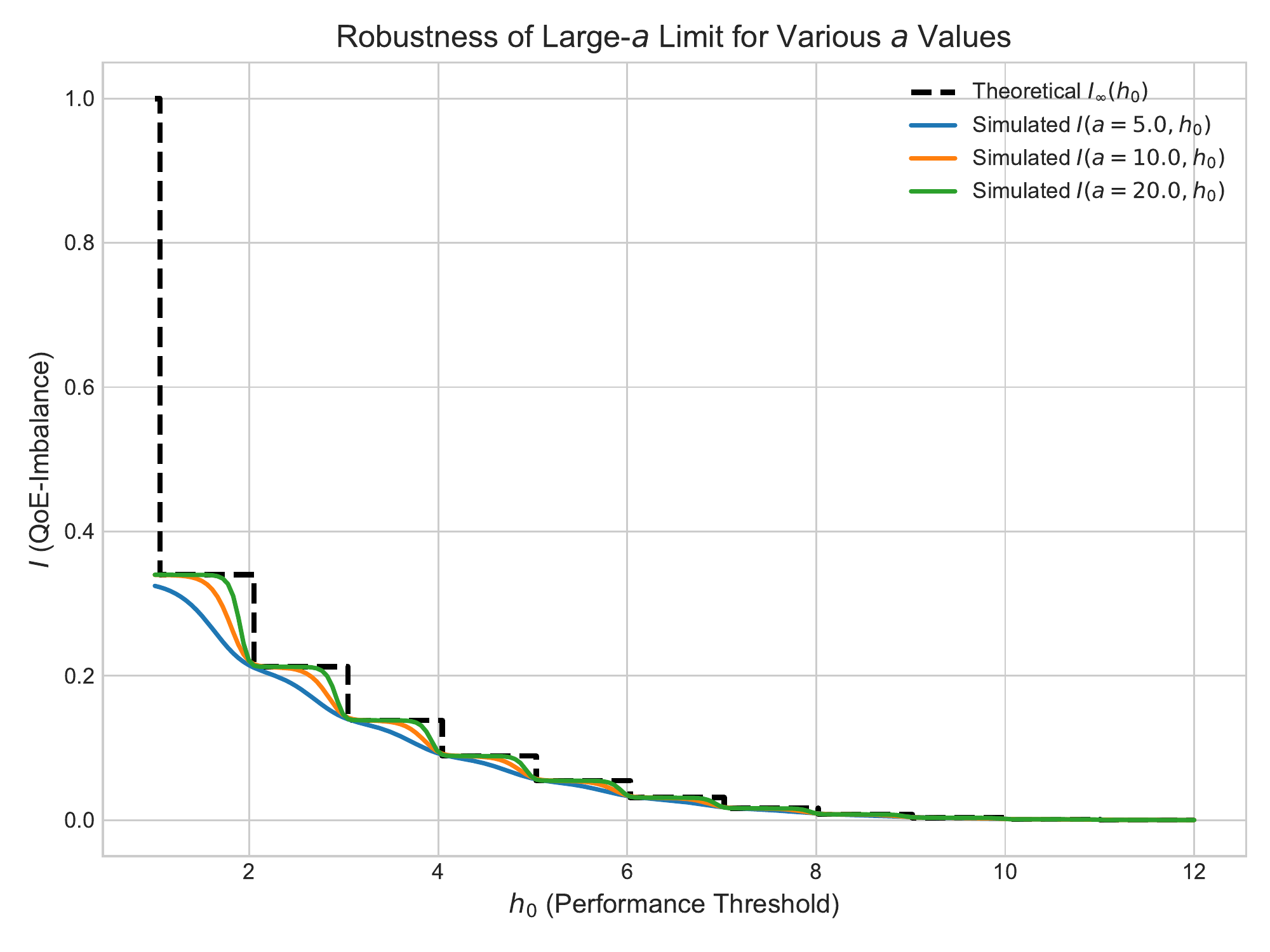}
    \caption{Supplementary validation of the Large-$a$ limit for multiple values of $a$. All simulated curves rapidly converge to the theoretical limit, confirming the robustness of the validation.}
    \label{fig:large_a_supplementary}
\end{figure}


\bibliographystyle{IEEEtran}
\bibliography{main}

\end{document}